\newtheoremstyle{theoremstyle}
  {10pt}   
  {10pt}   
  {\itshape}  
  {}       
  {\bfseries}  
  {.}      
  { }      
  {}       
\newtheoremstyle{definitionstyle}
  {10pt}
  {10pt}
  {\normalfont}
  {}
  {\bfseries}
  {.}
  { }
  {}
\theoremstyle{theoremstyle}
\newtheorem{theorem}{Theorem}
\newtheorem{lemma}{Lemma}
\theoremstyle{definitionstyle}
\newtheorem{definition}{Definition}
\newtheorem{algorithm}{Algorithm}
\definecolor{DeepBlue}{rgb}{0.1, 0.2, 0.6}
\definecolor{DeepGreen}{rgb}{0.0, 0.4, 0.2}
\definecolor{WarmGray}{rgb}{0.4, 0.4, 0.4}
\NewDocumentCommand{\set}{s m}{\IfBooleanTF{#1}{{\{#2\}}}{{\ensuremath{\left\{#2\right\}}}}}
\NewDocumentCommand{\abs}{s m}{\IfBooleanTF{#1}{{|#2|}}{{\ensuremath{\left|#2\right|}}}}
\NewDocumentCommand{\norm}{s m}{\IfBooleanTF{#1}{{\ensuremath{\left\|#2\right\|}}}{{\|#2\|}}}
\NewDocumentCommand{\IBU}{}{\mathtt{IBU}}
\NewDocumentCommand{\DKL}{}{{\mathrm{D}_{\mathrm{KL}}}}
\NewDocumentCommand{\inner}{s m}{\IfBooleanTF{#1}{{\langle#2\rangle}}{{\ensuremath{\left\langle#2\right\rangle}}}}
\NewDocumentCommand{\argmin}{}{\operatorname*{arg\,min}}
\NewDocumentCommand{\setargmin}{}{\operatorname*{arg\,min}^{\text{(set)}}}
\NewDocumentCommand{\setargmax}{}{\operatorname*{arg\,max}^{\text{(set)}}}
\def\oo{\infty}
\def\eps{\epsilon}
\def\M{\mathcal M\xspace}
\def\niters{{N_\text{iters}}}
\def\Var{\text{Var}}
\def\setMLE{\mathbf{MLE}}
\def\th{\theta}
\def\ph{\phi}
\def\pq{(p{-}q)}
\def\K{K}
\def\bbR{{\mathbb R}}
\def\Kout{{\K_{\text{out}}}}
\def\Kin{{\K_{\text{in}}}}
\def\INV{\texttt{Inv}\xspace}
\def\INVN{{\texttt{InvN}}\xspace}
\def\INVP{{\texttt{InvP}}\xspace}
\def\IBU{{\texttt{IBU}}\xspace}
\def\ourMLE{{\texttt{MLE*}}\xspace}
\def\nth{^{[n]}}
\def\nthprev{^{[n-1]}}
\def\nthnext{^{[n+1]}}
\def\sign{\textnormal{sign}}
\makeatletter\newcommand{\manualref}[2]{\protected@write\@auxout{}{\string\newlabel{#1}{{#2}{999}}}}\makeatother
\author{
\begin{tabular}[t]{c}
Carlos Antonio Pinzón\textsuperscript{\rm 1},
Ehab ElSalamouny\textsuperscript{\rm 1,\rm 5},
Lucas Massot\textsuperscript{\rm 3},\\
Alexis Miller\textsuperscript{\rm 4},
Héber Hwang Arcolezi\textsuperscript{\rm 2}, 
Catuscia Palamidessi\textsuperscript{\rm 1}\end{tabular}\\
{ }\\
\textsuperscript{\rm 1}{\textnormal{INRIA Saclay, France}}\\
\textsuperscript{\rm 2}{\textnormal{INRIA Grenoble, France}}\\
\textsuperscript{\rm 3}{\textnormal{École Polytechnique, France}}\\
\textsuperscript{\rm 4}{\textnormal{Ecole Normale Supérieure de Lyon, France}}\\
\textsuperscript{\rm 5}{\textnormal{Suez Canal University, Egypt}}\\
}
\title{Estimating the True Distribution of Data Collected with Randomized Response}
\begin{document}
\bibliographystyle{alpha}

\maketitle

\begin{abstract}
    Randomized Response (RR) is a protocol designed to collect and analyze categorical data with local differential privacy guarantees.
    It has been used as a building block of mechanisms deployed by Big tech companies to collect app or web users' data.
    Each user reports an automatic random alteration of their true value to the analytics server, which then estimates the histogram of the true unseen values of all users using a debiasing rule to compensate for the added randomness.
    A known issue is that the standard debiasing rule can yield a vector with negative values (which can not be interpreted as a histogram), and there is no consensus on the best fix.
    An elegant but slow solution is the Iterative Bayesian Update algorithm (IBU), which converges to the Maximum Likelihood Estimate (MLE) as the number of iterations goes to infinity.
    This paper bypasses IBU by providing a simple formula for the exact MLE of RR and compares it with other estimation methods experimentally to help practitioners decide which one to use.
\end{abstract}

\section{Introduction}

Local differential privacy (LDP)~\citep{duchi2013local} is a framework for providing privacy guarantees when collecting data from a set of users.
It removes the need for trust in the correct management of whoever collects the data and is typically used for automated telemetry.
The essence of LDP is to introduce controlled uncertainty right before the users transmit their data to avoid the data collector from receiving the exact specific information of each user.

The level of uncertainty or noise introduced by LDP is controlled by a parameter $\epsilon>0$, varying from a high privacy regime with $\epsilon\approx0$ (large uncertainty) to a low privacy regime with $\epsilon\gg 1$ (low uncertainty). 
The organization that collects and processes the data can not be certain about the specifics of any user in particular, but it can still derive statistical estimations.
These estimations become more precise as $\epsilon$ increases and as the size of the user group grows.

One of the simplest and most fundamental LDP mechanisms is the Randomized Response   ~\cite{kairouz2016discrete} (also called $\K$-ary RR, kRR, or simply RR), introduced firstly without any connection to LDP  in ~\cite{warner1965randomized}.
In a nutshell, RR reports either the truth with some fixed probability or a random value, picked uniformly.

Despite its simplicity, RR has been the subject of extensive research as it is the building block of more advanced mechanisms like Symmetric Unary Encoding and Local Hashing~\citep{wang2017locally} as well as longitudinal protocols, e.g.  RAPPOR~\citep{erlingsson2014rappor}, d-bitFlipPM~\citep{ding2017collecting}, Longitudinal Local Hashing~\citep{arcolezi2022frequency}.

Moreover, RR is known to be optimal among all LDP mechanisms for small domains or large values of $\epsilon$. Namely, whenever the domain size $\K$ satisfies $\K < 3 e^\epsilon + 2$~\cite{wang2017locally}.
Large values of $\epsilon$ are particularly relevant in light of recent advances in Shuffle differential privacy~\cite{erlingsson2019amplification,cheu2019distributed}, where a shuffler is applied after local randomization to strip metadata and randomly permute the reports. 
This added layer of anonymity enables privacy amplification, meaning that stronger central-DP guarantees can be achieved from weaker local ones.  The growing use of Shuffle DP in real-world scenarios heightens the importance of high $\epsilon$ regimes, and therefore, of RR.

Concerning statistical estimations, we focus on estimating the original data distribution, which is arguably the most fundamental statistic. To this purpose, typically the analytics server constructs a histogram of reported data, normalizes it to obtain a distribution, and uses a debiasing linear correction rule (\INV) to compensate for the added noise.
The resulting vector is proven to be unbiased, but it may not be a valid distribution, as it may have some negative entries.
\begin{table}
    \centering
    \renewcommand{\arraystretch}{1.5}
    \small
    \begin{tabular}{|l|l|l|}
        \hline
        \textbf{Method} & \textbf{Complexity} & \textbf{Justification} \\
        \hline
        Debiasing \INV   & $O(\K)$ & \checkmark{} Unbiased, but invalid \\
        Neg. fix \INVN   & $O(\K)$ & Workaround: simple \\
        Proj. fix \INVP  & $O(\K\log\K)$ & Workaround: close to \INV \\
        Bayesian \IBU & $O(\K\,\niters)$ & \checkmark{} MLE when $\niters\to\oo$\\
        Proposed \ourMLE    & $O(\K\log\K)$ & \checkmark{} MLE in one step\\
        \hline
    \end{tabular}
    \caption{Methods for estimating the original distribution from RR observations.  $\niters$ represents the number of iterations, which depends on the desired precision. }
    \label{tab:methods}
\end{table}
In order to obtain a distribution, two simple workarounds have been proposed. The first,  \INVN, sets to $0$ all the negative entries of the vectors, and renormalizes it. The second, \INVP,  projects the vector on the closest point on the simplex in terms of Euclidean distance. 

A different approach, called iterative Bayesian update (IBU), was proposed by ~\citep{agrawal2001design}. IBU  produces an estimate of the original distribution 
 using an iterative algorithm that converges to the maximum likelihood estimate (MLE) as the number of iterations grows~\cite{elsalamouny2020generalized}. 
In this sense, IBU is well justified mathematically. However, the asymptotic nature of the algorithm makes it less attractive. Another drawback is the missing theory about the number of iterations or the stop condition that guarantees a certain proximity to the  MLE.

Our first objective is to enhance the efficiency of MLE computation. To this aim, we propose $\ourMLE$, based on a mathematical formula that can be computed quickly, and which provides the exact MLE, thus bypassing the issue about the stop condition. 

The second objective is to compare these estimators. One may think that, since the  MLE is the ``most likely'' estimate, it would also be the most precise. Furthermore, the MLE is known to minimize the Kullback-Leibler divergence between the empirical distribution (i.e., the normalized histogram output by RR) and the ideal distribution, obtained by multiplying the input distribution by the channel matrix that represents the RR noise.  However, for practical purposes, the Kullback-Leibler divergence may not be the most important utility metric. Indeed,  the precision of an estimation is typically measured in terms of mean square error (MSE). In this paper, we compare the MSE precision of the MLE,  \INVN and \INVP. It turns out that, surprisingly,  the precision of \INVN and \INVP ``flip'' depending on the distribution being more or less concentrated, while that of the MLE is always in between.  

Other important terms of comparisons we investigate in this paper are efficiency,  consistency, and unbiasedness. The methods that we consider are summarized in Table~\ref{tab:methods}.

In summary, the  contributions of this paper are:
\begin{enumerate}
    \item 
We provide a simple formulation for the MLE, prove its correctness formally, and validate it empirically. These formulas coincide with \cite{kairouz2016discrete}, Supplementary Material, Section F. 
    \item
We propose an algorithm that computes the MLE and is significantly more efficient than the state-of-the-art solution (IBU).
In addition, our algorithm gives an exact solution, whereas IBU provides only an approximation because the number of iterations is necessarily finite.
    \item
We theoretically and experimentally compare MLE with other estimation methods to clarify their trade-offs, with the objective to help developers choose which one to use.
\end{enumerate}

\section{Preliminaries}

This section formulates the estimation problem, presents the state-of-the-art estimators from related work.

\subsection{Problem formulation}

There are $N$ users $u=1..N$, each of which has a secret value $x_u$ from a set of $\K$ categories labeled as $\{1,2,\ldots, \K\}$.
The secret value can represent, for example, the user's browser homepage, the number of times they click on a specific button, or the emoji they use most frequently.

An organization, referred to as the data collector or the analyst, is interested in finding out the most frequent categories of the population as a whole (not user by user), and more generally, they are interested in estimating the proportion $\theta_i$ of users $u$ taking value $x_u = i$ for every $i\in\set{1..\K}$.
This corresponds to finding a vector $\theta \in \Delta$ where the simplex $\Delta$ is the space of distributions over $\{1..\K\}$, i.e., $\Delta := \set{\th\in\bbR^\K: \sum_i \th_i = 1, \th_i\geq 0}$.

To avoid violating the privacy of the users, the data collector will not collect $x_u$ directly.
Instead, they will use a \emph{mechanism} $\M$ (a function whose output is influenced by randomness, also known as a channel in information theory), whose purpose is to deliberately transform the input data $x_u$ into some $y_u := \M(x_u)$ with controlled randomness to hide its true value while still providing some information about it.
One sample per user is measured.

We suppose that $\M$ is an instance of RR, and address the problem of how the analyst efficiently 
estimates the unknown distribution $\th\in\Delta$ using the outputs $(y_u)_{u=1}^N$ and 
the structure of the mechanism $\M$. 

\subsection{LDP and the RR mechanism}

Local Differential Privacy (LDP)~\citep{duchi2013local} is a strong privacy protection 
ensured by the user-side mechanism by setting a formal bound on how much information is 
revealed about the true input.
%
Formally, a mechanism $\M$ with discrete output space satisfies $\epsilon$-LDP for some $\epsilon>0$ (the smaller the more private) if for all inputs $x,x'$ and outputs $y$, it holds that 
$$\Pr(\M(x)=y) \leq e^\epsilon \Pr(\M(x')=y).$$
This protects the unknown input because upon observing any output $y$, the degree to which any candidate input $x$ is more likely than any other $x'$ is limited by $\epsilon$.

The randomized response (RR) mechanism takes an input $x\in\set{1..\K}$
and reports $\M(x)$ from the same domain such that
\begin{equation}\label{eq:rr}
    \Pr(\M(x)=y)=\begin{cases}
        p &\text{ if }y=x\\
        q = \frac{1-p}{\K-1} &\text{ otherwise}
    \end{cases}
\end{equation}
That is, $\M(x)$ returns $x$ with probablility $p$, and otherwise outputs some $x'\sim \text{uniform}(\set{1,...,\K}\setminus\set{x})$. Equivalently, it returns $x$ with probability $p-q$, 
and otherwise returns some $x'\sim \text{uniform}(\set{1,...,\K})$.
%
The RR was first introduced by~\citet{warner1965randomized} for the binary case ($\K=2$), as a survey technique for eliminating evasive answer bias.
More recently, RR has gained increased attention in the context of LDP, since it 
satisfies $\epsilon$-LDP for $\epsilon = \log(p/q)$.

\subsection{Estimators based on expectation}

Let $\ph\in\Delta$ denote the normalized histogram of the observed 
$y_u$ for $u=1..\K$,  after applying RR parametrized by some known value $p$.
Note that the expected value of $\ph$ is $\ph_i = q + (p-q) \th_i$.
Based on this observation,~\citet{kairouz2016discrete} derived a simple and unbiased estimator, which we call the \emph{linear inversion} estimator (\INV)
\begin{equation}
    \hat\th_i^\INV := \INV(\ph)_i := \frac{\ph_i-q}{p-q}.
    \label{eq:INV}
\end{equation}
Nevertheless, the constraint $\hat\th^\INV\in\Delta$ may fail to hold, 
since \eqref{eq:INV}  may produce negative values. 
%
%
This occurs very often, as detailed later in Section~\ref{sec:theoretical:validity}.
To overcome this issue, the following two post-processing solutions have been proposed in the literature.
\begin{enumerate}
    \item Normalization (\INVN): $\hat\th^\INVN_i \propto \max(0, \hat\th^\INV_i)$, i.e., set to zero the negative components and rescale.
    \item Projection (\INVP): $\hat\th^\INVP:= \argmin_{\th} \norm{\th - \hat\th^\INV}$, where $\th$ varies in the set of valid distributions.
\end{enumerate}
These estimators are valid (they always produce distributions), and they are fast.
\INVN is (clearly) $O(\K)$ and \INVP can be implemented in $O(\K\log\K)$.
However, their design is a post-processing workaround without formal guarantees.

\section{Estimators Based on Likelihood}


Let $\setMLE(\ph)$ be the set of all MLEs. That is
\begin{equation}
    \setMLE(\ph) := \setargmax_{\th\in\Delta} \Pr(\ph | \th),
\end{equation}
where $\Pr(\ph|\th)$ denotes the probability that the normalized histogram of the 
random variables $Y_u$ for $u=1..N$, is $\ph$ provided 
that the distribution for $x_u$ is $\th$ and $Y_u\sim \M(x_u)$.
Based on the theory of the Expectation-Maximization algorithm~\citep{dempster1977maximum}, 
it has been shown~\citep{agrawal2001design,agrawal2005privacy,elsalamouny2020generalized} 
that one can asymptotically approach an MLE using an algorithm known as 
Iterative Bayesian Update (IBU). This algorithm is applicable to any discrete mechanism 
with a finite channel matrix $C_{i\,j}:=\Pr(\M(i)=j)$ of size $\Kin \times \Kout$.

IBU starts with a fully supported prior distribution $\hat \th^{(0)}$, by default $\hat \th^{(0)}:=(1/\Kin,1/\Kin,...,1/\Kin)$, and repeatedly updates $\hat \th^{(t)}$ into $\hat \th^{(t+1)}$ using what is known as Jeffrey's update rule~\citep{jacobs2021learning,pinzón2025jeffreysupdateruleminimizer}:
\begin{equation}
    \hat \th^{(t+1)}_i := \sum_{j=1}^\Kout \frac{\ph_j\; \hat \th^{(t)}_i C_{i\,j}}{\sum_{k=1}^\Kin \hat \th^{(t)}_k C_{k\,j}}.
    \label{eq:IBU}
\end{equation}
The most important property of IBU is that $\hat \th^{(t)}$ converges to $\hat \th^*$ for some $\hat \th^* \in \setMLE(\ph)$ as $t\to\oo$~\cite{elsalamouny2020generalized}.
The complexity of this procedure for a fixed large number of iterations $t=\niters$ is $O(\Kin\,\Kout\,\niters)$, because at each time step $t$, all denominators (for every $j$) can be cached in $O(\Kin\,\Kout)$ and then $\hat \th^{(t+1)}$ can be computed in $O(\Kin\,\Kout)$.

In the particular case of RR, where $\Kin=\Kout=K$, the algorithm can be accelerated from $O(\K^2\,\niters)$ to $O(\K\,\niters)$ using symmetries. Precisely, the update step \eqref{eq:IBU} can be simplified to the following, 
which takes only $O(\K)$. 
\begin{equation}
    \begin{aligned}
    s^{(t+1)} &:= \sum_{i=1}^\K \frac{\ph_i}{q + \pq\, \hat\th^{(t)}_i}
    \\
    \hat\th^{(t+1)}_i &:= \hat\th^{(t)}_i \left(q\, s^{(t+1)} + \frac{\pq\, \ph_i}{q + \pq\,\hat\th^{(t)}_i}\right).
    \end{aligned}
    \label{eq:IBU-RR}
\end{equation}
We will denote this procedure by $\IBU(\ph):=\hat \th^{(\niters)}$ for some fixed $\niters$.



Lastly,~\citet{ye2025revisiting} propose an algorithm that optimizes a regularized version of the likelihood by merging small values together.
Their algorithm runs in $O(\K^2 \log(\K)\,\niters)$ and uses a smoothing factor, which, for some practical experiments, gives better results than a pure MLE.
\citet{hay2009boosting} address the estimation under central (not local) differential privacy, and \citet{lee2015maximum} propose an approximate iterative method (like IBU).

\section{Formula for the MLE}

This section starts with a derivation of the formula for the MLE, which coincides in different notation with that of \cite{kairouz2016discrete}, Supplementary Material, Section F, and then provides the pseudocode to compute it and a summarizing sketch of the detailed self-contained proof in the Supplementary Material.

Consider a RR mechanism, defined with $p>0$ and $q=\frac{1-p}{\K-1}$ as in \eqref{eq:rr}.
For any observed distribution $\ph$ and some threshold $\tau \in[\min_i \ph_i, \max_i \ph_i]$, define $\ph^\tau$ as
\[
\ph^\tau_i = \begin{cases}
    q & \text{if } \ph_i < \tau\\
    c_\tau\, \ph_i & \text{otherwise,}
\end{cases}
\qquad c_\tau=\frac{1-\sum_{\ph_i<\tau} q}{\sum_{\ph_i \geq \tau} \ph_i}.
\]
Notice that $\sum_i \ph^\tau_i = 1$ by the definition of the constant $c_\tau$. 
Observe also that the denominator in $c_\tau$ is non-zero because 
$\tau \leq \max_i \ph_i$.
For this threshold transformation, notice that
\begin{equation}
    \INV(\ph^\tau)_i = \begin{cases}
        0 & \text{if } \ph_i < \tau\\
        \frac{c_\tau \ph_i\ - q}{p - q} & \text{otherwise.}
    \end{cases}
    \label{eq:tau-estimator}
\end{equation}
The vector $\INV(\ph^\tau)$ sums up to $1$, but it may contain negative values for small $\tau$.
The proposed estimator, which we call $\ourMLE(\ph)$, is precisely $\INV(\ph^{\tau^*})$ where 
$\tau^*$ is the smallest threshold $\tau$ for which $\INV(\ph^\tau)$ does not contain negative values:
\begin{equation}
\begin{aligned}
    &\hat\th^\ourMLE := \ourMLE(\ph) := \INV(\ph^{\tau^*}),\\
    &\tau^* := \min \set{\tau\; \middle|\; \forall i,\,\ph_i {<} \tau \vee \;c_\tau\,\ph_i {\geq} q }.
\end{aligned}
    \label{eq:tau-star} 
\end{equation}

\def\idx#1{{\sigma({#1})}}

Algorithm~\ref{alg:proposed} computes $\hat\th^\ourMLE$ for 1-indexed arrays, and we give 
a zero-indexed Python implementation in the Supplementary Material.
The following invariant holds during the loop: $s$ is the suffix sum $s = \sum_{i>k} \ph_{\idx{i}}$. 
After the loop, $i$ is the number of zeros in the output $\hat\th^\ourMLE$ and $\tau^* = \phi_{\sigma(i)}$.
The main loop that keeps track of $s$ and finds the threshold $\tau^*$ is $O(\K)$, 
therefore, the algorithm as a whole is $O(\K\log \K)$ because it sorts the indices of $\ph$ 
before entering the main loop.

\begin{algorithm}
\caption{\label{alg:proposed} Proposed algorithm \ourMLE. Indexed from 1. }
\begin{algorithmic}[1]
\State \textbf{Input:} $\ph_{1..\K}$, $p$, $q$.
\State $\sigma \gets$ argsort($\ph_{1..\K}$) \Comment{$\ph_{\idx{1}}\leq \cdots\leq \ph_{\idx{\K}}$}
\State $i \gets 0,\quad s \gets 1$ 
\While{$i < \K$ and $q s > \ph_{\idx{1+i}}\,(1 - i q)$}
    \State $s \gets s - \ph_{\idx{1+i}},\quad i \gets i + 1$
\EndWhile
\State $\hat\th^*_{\idx{j}} \gets 0$ for each $j=1,...,i$.
\State $\hat\th^*_{\idx{j}} \gets \frac{\ph_{\idx{j}}\, (1 - i q)  - s q}{s \pq}$  for each $j=i{+}1, ..., \K$.
\State \Return $\hat\th^*_{1..\K}$
\end{algorithmic}
\end{algorithm}

We conclude this section by showing that \ourMLE is the unique MLE of the RR mechanism, and we show a theoretical application of this result.

\begin{theorem}\label{thm:main-text}
    The MLE for the RR mechanism is unique and given by the proposed formula \ourMLE.
\end{theorem}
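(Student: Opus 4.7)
The plan is to work with the log-likelihood
\[
\ell(\th) \;=\; N \sum_i \ph_i \log\!\bigl(q + \pq\, \th_i\bigr),
\]
apply the KKT conditions on the simplex $\Delta$, and match the resulting system with the construction of $\ourMLE$. Since $\log(q+\pq\,\th_i)$ is strictly concave in $\th_i$ whenever $p\ne q$ (the only nondegenerate case), $\ell$ is concave on $\Delta$ and strictly concave along every coordinate with $\ph_i>0$; this already yields uniqueness of any maximizer on the support of $\ph$, and the KKT analysis below will pin $\th_i=0$ whenever $\ph_i=0$, making the MLE globally unique.

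Next I would write stationarity with multipliers $\lambda\in\R$ for $\sum_i\th_i=1$ and $\mu_i\ge 0$ for $\th_i\ge 0$:
\[
\frac{N\,\ph_i\,\pq}{q + \pq\,\th_i} \;=\; \lambda - \mu_i, \qquad \mu_i\,\th_i = 0.
\]
Setting $c := N\pq/\lambda$ and $\tau := q/c$, complementary slackness splits the coordinates into two classes. If $\th_i>0$ then $\mu_i=0$ and $q+\pq\,\th_i = c\,\ph_i$, hence $\ph_i\ge\tau$ and $\th_i=(c\,\ph_i-q)/\pq$; if $\th_i=0$ then $c\,\ph_i\le q$, hence $\ph_i\le\tau$. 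Imposing $\sum_i\th_i=1$ identifies $c$ with the normalizing constant $c_\tau$ of \eqref{eq:tau-star}, so every KKT point has the form $\INV(\ph^\tau)$ for some $\tau$ satisfying (a) $c_\tau\,\ph_i\ge q$ whenever $\ph_i\ge\tau$ and (b) $c_\tau\,\ph_i\le q$ whenever $\ph_i<\tau$.

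The final and main step is to verify that $\tau^*$ defined in \eqref{eq:tau-star} satisfies both (a) and (b); (a) holds by the very definition of $\tau^*$, so the real work is (b). I would argue this by monotonicity: a direct algebraic computation shows that when $\tau$ crosses a value $\ph_i$ from below, so that index $i$ moves from surviving to zeroed, the new constant $c_{\tau'}$ satisfies $c_{\tau'}\le c_\tau$ if and only if $c_\tau\,\ph_i\le q$. By minimality of $\tau^*$, every index crossed on the sweep $\tau\in[\min_j\ph_j,\tau^*]$ had $c_\tau\,\ph_i<q$ at its crossing, so $c_\tau$ is weakly decreasing along the sweep and every already-zeroed index continues to satisfy $c_\tau\,\ph_j<q$. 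Hence $\tau^*$ fulfills both (a) and (b), so $\INV(\ph^{\tau^*})$ is a valid KKT point and, by concavity, the unique MLE. Comparing $\tau^*$ with the termination of Algorithm~\ref{alg:proposed}'s while loop (which zeros out $\ph_{\sigma(1+i)}$ precisely when $c_\tau\,\ph_{\sigma(1+i)}<q$) shows that the algorithm returns exactly $\INV(\ph^{\tau^*})=\hat\th^\ourMLE$.
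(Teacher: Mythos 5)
Your proposal is correct, but it takes a genuinely different route from the paper's. The paper derives only \emph{necessary} first-order conditions (a Lagrange-multiplier argument with an exponential reparameterization, Theorem~\ref{thm:lagrange}), and then must do substantial extra work to single out the right candidate among all vectors of that form: a swap-based monotonicity theorem (Theorem~\ref{thm:monotonicity}) showing the zeros of the MLE sit at the smallest entries of $\ph$, a lemma showing the feasible values of $n$ form a suffix of $\set{0,\dots,\K-1}$, and an explicit term-by-term comparison of $\DKL(\ph,\M(\th\nthprev))$ versus $\DKL(\ph,\M(\th\nth))$ to prove the smallest feasible $n$ wins (Lemmas~\ref{lemma:props-nth} and~\ref{lemma:n-smallest}). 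You instead observe that $\ell$ is concave on the polyhedron $\Delta$, so the full KKT system (with inequality multipliers $\mu_i$) is both necessary and \emph{sufficient}; any KKT point is automatically a global maximizer, and strict concavity on the support of $\ph$ gives uniqueness up front. This collapses the paper's candidate-comparison machinery entirely: the only remaining work is your sweep argument that $c_\tau$ is weakly decreasing as indices are zeroed precisely when $c_\tau\ph_i\le q$, which verifies complementary slackness for the zeroed coordinates at $\tau^*$ (and your algebra for the equivalence $c_{\tau'}\le c_\tau \iff c_\tau\ph_i\le q$ checks out). Your version is shorter and handles sufficiency more cleanly than the paper's Lagrange computation; the paper's version is more elementary and its intermediate lemmas are reused elsewhere (the monotonicity and the $g(n)$ suffix structure reappear in the continuity and collinearity results). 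Two small points you should tighten if writing this up fully: state why the feasible set defining $\tau^*$ is nonempty (it contains $\max_i\ph_i$ exactly because $\K q\le 1$, i.e.\ $p\ge q$), and make explicit that $\lambda>0$ so that coordinates with $\ph_i=0$ are genuinely forced to zero by complementary slackness.
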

\newenvironment{proofsketch}
  {\noindent\textit{Proof sketch.} }
  {\hfill$\square$\par}
\begin{proofsketch}
The complete proof is given in Theorem~\ref{thm:new-main} (Supplementary Material). The following is a sketch of it.

Suppose that $\hat\th$ is an MLE.
It can be shown using the Lagrange Multipliers method that for every category $i$, either $\hat\th_i=0$ or $\hat\th_i = \frac{\ph_i}{\lambda} - \frac{q}{p-q}$, where the value of $\lambda$ is such that the total sum is 1, i.e. $\lambda := \frac{(p-q) \sum_{\hat\th_i>0} \ph_i}{1- |\set{i:\hat\th_i{=}0}| q}$.
This is proven in Theorem~\ref{thm:lagrange} (Supplementary Material).
With this result, the problem of computing all the components of $\hat\th$ is reduced into identifying the components $i$ in which $\hat\th_i=0$ because $\lambda$ and all the other components can be computed with the given formulas.
The rest of the proof is devoted to identifying the zero-valued components.

As shown in Theorem~\ref{thm:monotonicity} (Supplementary Material), the entries in $\hat\th$ are 
monotonic with respect to the entries in $\ph$ in the sense that $\hat\th_i\leq\hat\th_j$ if and only 
if $\ph_i\leq\ph_j$. This observation follows by contradiction: if $\hat\th$ was not monotonic with 
respect to $\ph$, then the order of two entries can be flipped, resulting in a new $\hat\th'$ that 
has a higher likelihood.  

Therefore, the zeros of $\hat\th$ must occur in the positions of the smallest $n$ components of $\ph$.
In other words, letting $\th\nth$ be the result of applying the aforementioned formula obtained via Lagrange Multipliers assuming that the zeros occur in the positions of the smallest $n$ values of $\ph$, then the MLE must be $\th\nth$ for some $n\in\set{0,...,\K-1}$ whose value is unknown so far.

Finally, to find the value of $n$, let $e_i$ denote the value of the $i$'th smallest component of $\phi$ and define $g(n):= (1-nq)e_{n+1} - q (\sum_{i>n} e_i)$ for $n<\K$, then the desired $n$ must satisfy $g(n)\geq 0$ and must be as small as possible.
This is shown in Lemmas~\ref{lemma:props-nth} and~\ref{lemma:n-smallest} (Supplementary material).
The threshold $\tau^*$ in Equation~\eqref{eq:tau-star} corresponds to $e_{n+1}$.

In summary, $\hat\th=\th\nth$, so the MLE is unique.
\end{proofsketch}

Thanks to Theorem~\ref{thm:main-text}, it is possible to study and prove analytical properties of the MLE, like the following, which partly explains some of the results in Section~\ref{sub:exp_main_results}.

\begin{theorem}
    If the smallest entry in $\phi$, say $\phi_i$, takes value $\ph_i<q$, and the second smallest is at least $(\K q - \ph_i) / (\K - 1)$, then the points in $\bbR^\K$ given by $\hat\th^\INVP$, $\hat\th^\ourMLE$ and $\hat\th^\INVN$ are collinear, with $\hat\th^\ourMLE$ in the middle.
\end{theorem}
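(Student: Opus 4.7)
My plan is to exploit the observation that under the two hypotheses each of the three estimators has a zero in exactly the single coordinate $\sigma(1)$ corresponding to the smallest entry $\ph_i = \ph_{\sigma(1)}$, and to compute closed forms on the remaining $\K - 1$ coordinates. Collinearity will then reduce to checking that $\hat\th^{\ourMLE} - \hat\th^{\INVN}$ and $\hat\th^{\INVP} - \hat\th^{\INVN}$ are coordinate-wise proportional, with a common scalar in $(0,1)$ placing $\hat\th^{\ourMLE}$ in the middle.

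First I would verify that only coordinate $\sigma(1)$ is modified in each construction. Since $\ph_i < q$ implies $(\K q - \ph_i)/(\K-1) > q$, the hypothesis forces $\ph_{\sigma(j)} > q$ for $j \geq 2$, so $\hat\th^{\INV}_{\sigma(j)} > 0$ there and \INVN zeroes only the smallest coordinate. For \INVP I would use the standard characterization of the Euclidean projection onto the simplex, $\hat\th^{\INVP}_{\sigma(j)} = \max(0, \hat\th^{\INV}_{\sigma(j)} - \mu)$ with $\mu$ chosen so the positive parts sum to $1$; assuming only coordinate $\sigma(1)$ is truncated yields $\mu = (q - \ph_i) / (\pq (\K-1))$, and the consistency condition $\mu \leq \hat\th^{\INV}_{\sigma(2)}$ rearranges to exactly the hypothesized bound $\ph_{\sigma(2)} \geq (\K q - \ph_i)/(\K - 1)$. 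For \ourMLE, at $i=1$ the loop condition in Algorithm~\ref{alg:proposed} reads $q(1-\ph_i) > \ph_{\sigma(2)} (1-q)$, which fails iff $\ph_{\sigma(2)} \geq q(1-\ph_i)/(1-q)$; a short calculation gives $(\K q - \ph_i)/(\K-1) - q(1-\ph_i)/(1-q) = (q - \ph_i)(1 - \K q) / ((\K - 1)(1-q)) > 0$, so the hypothesis is strictly stronger and the algorithm stops after zeroing exactly one coordinate.

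With this setup I then write out, for $j \geq 2$,
\[
\hat\th^{\INVN}_{\sigma(j)} = \frac{\ph_{\sigma(j)} - q}{p - \ph_i}, \qquad \hat\th^{\INVP}_{\sigma(j)} = \frac{(\K - 1)\ph_{\sigma(j)} - \K q + \ph_i}{\pq (\K - 1)}, \qquad \hat\th^{\ourMLE}_{\sigma(j)} = \frac{\ph_{\sigma(j)}(1-q) - (1-\ph_i) q}{(1-\ph_i) \pq},
\]
and compute the two displacements against $\hat\th^{\INVN}$. After a routine common-denominator simplification (using $(\K - 1) q = 1 - p$ and $\sum_j \ph_{\sigma(j)} = 1$ to collapse cross-terms), both numerators factor through $(q - \ph_i) \bigl[ (\K-1) \ph_{\sigma(j)} - (1 - \ph_i) \bigr]$, which yields the coordinate-independent ratio
\[
\frac{\hat\th^{\ourMLE}_{\sigma(j)} - \hat\th^{\INVN}_{\sigma(j)}}{\hat\th^{\INVP}_{\sigma(j)} - \hat\th^{\INVN}_{\sigma(j)}} = \frac{q (\K - 1)}{1 - \ph_i} = \frac{1 - p}{1 - \ph_i}.
\]
Combined with agreement at coordinate $\sigma(1)$, this proves collinearity. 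Finally, $\ph_i < q < p < 1$ forces this ratio into $(0, 1)$, so $\hat\th^{\ourMLE}$ lies strictly between $\hat\th^{\INVN}$ (ratio $0$) and $\hat\th^{\INVP}$ (ratio $1$).

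The main obstacle I anticipate is the bookkeeping in the factorization step: the three displacements look asymmetric at first, and coaxing the shared factor $(\K-1)\ph_{\sigma(j)} - (1 - \ph_i)$ out of each one requires a patient expansion. Conceptually the result is natural, since both simplex corrections (renormalize vs.\ shift) differ from the MLE only by a one-parameter adjustment on the surviving coordinates.
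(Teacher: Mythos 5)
Your proposal is correct and follows essentially the same route as the paper's proof: all three estimators vanish exactly at the smallest coordinate, closed forms on the surviving coordinates make the two displacement vectors proportional, and the scalar (your $\tfrac{1-p}{1-\ph_i}\in(0,1)$ with $\hat\th^{\INVN}$ as base point is equivalent to the paper's $t=\tfrac{1-\ph_i}{p-\ph_i}>1$ with $\hat\th^{\INVP}$ as base point) places $\hat\th^{\ourMLE}$ in the middle. Your only departure is a welcome one: you explicitly verify that the hypothesis on the second-smallest entry is exactly the condition for \INVP{} to truncate a single coordinate and is strictly stronger than the condition for the MLE to have a single zero, a step the paper leaves implicit.
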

\begin{proof}
    Proven in Theorem~\ref{thm:collinearity} (Supplementary Material).
\end{proof}


\section{Theoretical Comparison}

This section compares validity, unbiasedness, MSE, consistency and complexity for the estimators of interest: \ourMLE, \INV, \INVN and \INVP.

\subsection{Validity}\label{sec:theoretical:validity}

By validity, we denote the property that the returned estimates are always guaranteed to be valid distributions.
\INV guarantees $\sum_i \hat \th_i=1$ but not $\hat \th_i\geq 0$, therefore, it is invalid.
The other three estimators are valid by design.

The problem that \INV produces vectors with negative values occurs very frequently in practice.
Indeed, for a population with $\th_i\approx 0$ for some $i$, the probability of $\hat \th_i< 0$ equals that of $\ph_i< q$, which is mostly governed by the probability of a Binomial($N-1, q$) not exceeding $Nq$.
This value is close to $1/2$ and therefore non-negligible regardless of $N$.

\subsection{Unbiasedness}

It was shown in~\citet{wang2017locally}, Theorems 1 and 2, that \INV is unbiased and has element-wise variance given by
\begin{equation}
\Var(\hat \th_i) = \frac{q(1-q)}{N\pq} + \th_i \frac{\pq(1-2q-\pq)}{N\pq},
    \label{eq:var-inv}
\end{equation}
which follows from the observation that $\ph_i$ follows a Multinomial distribution.

\INVN, \INVP and \ourMLE, however, are biased for certain $\th$, and this is an inevitable consequence of being valid.
For instance, fix $\th=(1, 0, ..., 0)$ and vary $\ph$.
The realizations of the random vectors $\ph$ and $\hat \th$ are points scattered around $\ph'=(\pq, q, ..., q)$ and $\th$ respectively.
Some of the points around $\th$ are outside the valid region and some are inside, but they balance and we have $E(\hat \th)=\th$.
If we apply any rule $f$ that moves invalid points to the valid region and keeps valid points as they are, then $E(f(\hat \th))$ will necessarily fall in the (strict) interior of the valid region.
Thus, since $\th$ is in the border, not in the interior, we obtain $E(f(\hat \th))\ne \th$.
A generic proof for this fact can be found in~\citet{berger1990inadmissibility}.

\subsection{Complexity}

In terms of computational speed, $\INV$, $\INVN$ and $\INVP$ are $O(\K)$ while $\IBU$ is $O(\K\,\niters)$ and $\ourMLE$ is  $O(\K\log \K)$ (cost of sorting the indices).
The larger cost of $\ourMLE$ above $O(\K)$ is very low relative to the gained guarantees: the output is valid, it produces the exact MLE, and the complexity difference is sub-polynomial.

\subsection{MSE, TV and consistency}

\def\MSE{\text{MSE}}
The Mean Squared Error is defined as $\MSE_\th(\hat \th):=E_{\ph|\th}(\norm{\hat \th-\th}^2)$, where $\hat \th$ is fixed to one of the four estimators.
As shown in Theorem~\ref{thm:consistency} (Supplementary Material), the four estimators are consistent because $\MSE_\th(\hat \th) \in O(\K/N)$, which converges to $0$ as $N\to\oo$.
This implies that other measures, like the Total Variation $\text{TV}_\th(\hat \th):=\frac{1}{2} E_{\ph|\th}(\norm{\hat \th-\th}_1)$ also converge to $0$ as $N\to\oo$.
Alternatively, the consistency of $\ourMLE$ can be proven using the sufficient conditions derived by \citet{elsalamouny2025consistencyperformanceiterativebayesian} for the consistency of MLE for generic privacy mechanisms, of which RR is a particular case.

\section{Experiments} \label{sec:experiments}

In this section, we describe our experimental setup, present key results, and discuss the implications of our findings. The code is available at \cite{ourRepository}.  
Our experiments pursue two main objectives:  
\textbf{(1)} validating the correctness of our \ourMLE{} estimator by comparing it against the \IBU{} approach under the RR mechanism, and  \textbf{(2)} evaluating the performance and robustness of \ourMLE{} compared to two standard baselines, \INVP{} and \INVN.    
Specifically, while \INVP{} and \INVN{} perform well in distribution-specific settings, \ourMLE{} offers a consistently strong performance across a wide range of scenarios, making it a robust and reliable choice when the true data distribution is unknown.

\subsection{General Setup} \label{sub:setup_exp}

We conduct an extensive empirical evaluation to compare the performance of three estimators under the RR mechanism: 
\ourMLE, \INVP, and \INVN.

Experiments are designed to assess robustness across a wide range of configurations. 
Specifically, we vary:
\begin{itemize}
    \item Privacy budget $\epsilon \in \{1, 2, \ldots, 9, 10\}$;

    \item Sample size $N \in \{10^2, 10^3, 10^4, 10^5, 10^6\}$;

    \item Domain size $K \in \{50, 100, 1000, 5000\}$;

    \item Data skewness, controlled via a Zipf parameter $s$:
    \begin{itemize}
        \item Low concentration (near-uniform): $s=0.01$, 
        \item Moderate concentration: $s=1.3$, 
        \item High concentration: $s=2.5$. 
    \end{itemize}
\end{itemize}

Each configuration is repeated with 100 different random seeds to account for statistical variability, and the performance is evaluated using: (i) Mean Squared Error (MSE), and (ii) Negative Log-Likelihood.

\subsection{Overview of Results} \label{sub:exp_main_results}

To support a comprehensive evaluation, we conduct an extensive set of experiments varying the key parameters of Section~\ref{sub:setup_exp}. 
All experimental results, covering the full grid of configurations, are provided in the supplementary material (Section~\ref{app:results}). 
In this section, we highlight representative scenarios that illustrate the main trends and insights, focusing on how each estimator performs under different data regimes and parameter settings.

\paragraph{Evaluation.} Figure~\ref{fig:overview_results_fixed_eps} presents performance results for a practically relevant industrial setting with a large fixed domain size $K = 10{,}000$ and privacy level $\epsilon = 4.0$, while varying the number of users $N$. 
This choice of $K$ reflects the high-cardinality domains commonly encountered in real-world applications, such as telemetry data by Google Chrome and Microsoft Windows~\cite{erlingsson2014rappor,ding2017collecting}. 
We evaluate estimator performance across three representative data distributions, characterized by Zipf concentration parameters: \textbf{$s = 0.01$ (near-uniform)}, \textbf{$s = 1.3$ (moderately skewed)}, and \textbf{$s = 2.5$ (highly skewed)}.
Figure~\ref{fig:overview_results_fixed_n} complements this view by fixing the number of users to $N = 10^6$ and instead varying the privacy budget $\epsilon$, providing insights into estimator behavior across different privacy regimes.




\begin{figure}[!htb]
    \centering

    \begin{subfigure}[b]{1\linewidth}
        \includegraphics[width=\linewidth]{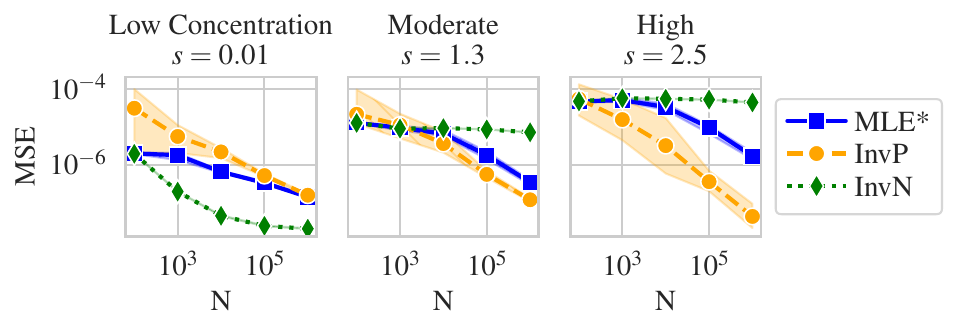}
        \caption{Mean Squared Error (MSE), $K=10000, \eps=4$}
        \label{fig:overview_mse_fix_eps}
    \end{subfigure}
    \hfill
    \begin{subfigure}[b]{1\linewidth}
        \includegraphics[width=\linewidth]{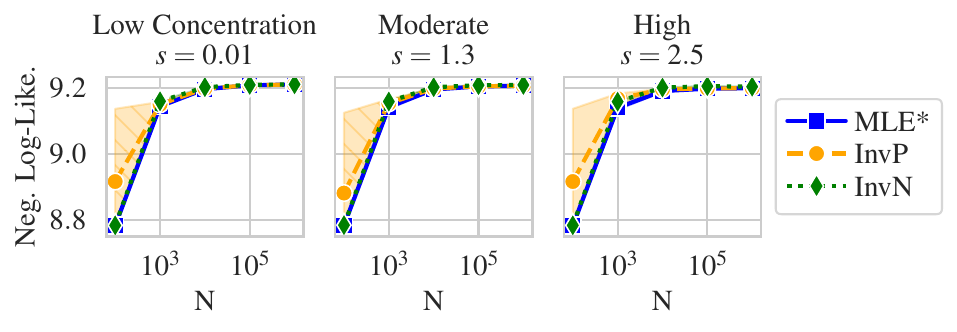}
        \caption{Negative Log-Likelihood, $K=10000, \eps=4$}
        \label{fig:overview_negloglike_fix_eps}
    \end{subfigure}

    \caption{Performance of \ourMLE, \INVP, and \INVN{} across different data distributions ($s \in \{0.01, 1.3, 5.0\}$) for fixed $\K=10{,}000$ and $\epsilon=4.0$.}
    \label{fig:overview_results_fixed_eps}
\end{figure}

\begin{figure}[!htb]
    \centering

    \begin{subfigure}[b]{1\linewidth}
        \includegraphics[width=\linewidth]{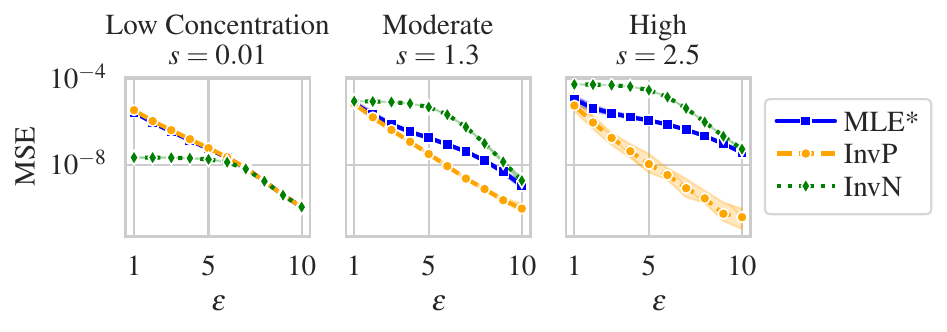}
        \caption{Mean Squared Error (MSE), $K=10000, N=10^6$}
        \label{fig:overview_mse_fix_n}
    \end{subfigure}
    \hfill
    \begin{subfigure}[b]{1\linewidth}
        \includegraphics[width=\linewidth]{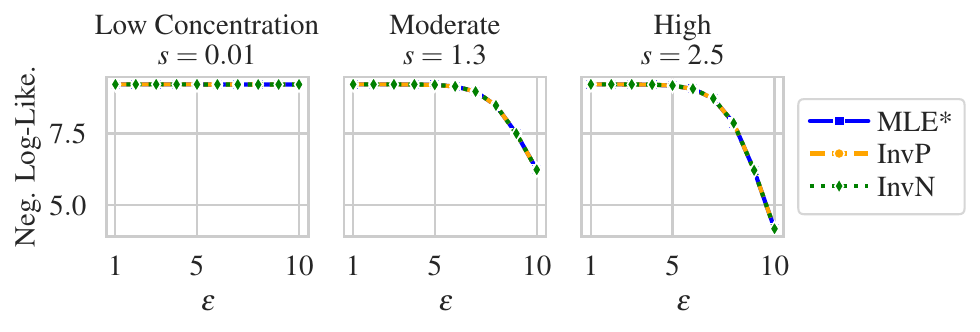}
        \caption{Negative Log-Likelihood, $K=10000, N=10^6$}
        \label{fig:overview_negloglike_fix_n}
    \end{subfigure}

    \caption{Performance of \ourMLE, \INVP, and \INVN{} across different data distributions ($s \in \{0.01, 1.3, 5.0\}$) for fixed $\K=10{,}000$ and $N=10^6$.}
    \label{fig:overview_results_fixed_n}
\end{figure}

\paragraph{Discussion.}
For the MSE metric, these results highlight the complexity of estimator behavior across different privacy and data regimes. 
As shown in Figures~\ref{fig:overview_mse_fix_eps} and~\ref{fig:overview_mse_fix_n}, the relative performance of \INVP{} and \INVN{} varies significantly with both the distributional shape (controlled by $s$) and system parameters ($N$, $\epsilon$).

We observe that in some settings, one estimator clearly dominates: for instance, \INVP{} consistently performs best under high concentration ($s=2.5$), while \INVN{} is clearly superior in the near-uniform regime ($s=0.01$). 
In other cases, the ranking between \INVP{} and \INVN{} can change depending on $\epsilon$ or $N$, with one starting off stronger but eventually being overtaken as conditions change.

Amidst this variability, \ourMLE{} demonstrates consistently strong and stable performance. 
It is consistently ``sandwiched'' between the two baselines, i.e., never the worst, often close to the best. 
This behavior underscores the \textbf{robustness} of \ourMLE{}: it adapts well across a wide range of scenarios without requiring prior knowledge of the underlying data distribution or the optimal estimator for a given configuration. 
As such, \ourMLE{} is a reliable default choice when performance must be maintained under uncertainty.

Beyond MSE, as shown in both Figures~\ref{fig:overview_negloglike_fix_eps} and~\ref{fig:overview_negloglike_fix_n}, \ourMLE{} consistently achieves the lowest Negative Log-Likelihood across all configurations, regardless of the data distribution, sample size, or privacy budget. 
This is expected, as our estimator is explicitly derived via maximum likelihood and optimized to minimize this very objective. 
In contrast, \INVP{} and \INVN{}, which are not likelihood-based, often result in poorer fit to the observed data in terms of Negative Log-Likelihood, even when they perform well under MSE. 
These results confirm that \ourMLE{} not only offers robust accuracy but is also statistically well-calibrated to the data generation process.

\begin{tcolorbox}[title=\ourMLE{} as a Robust Default Estimator]
Real-world distributions are unknown and often highly variable. Across our extensive evaluations, no single baseline estimator consistently dominates: \INVP{} and \INVN{} alternate in performance depending on the privacy budget, sample size, and distribution skew. In contrast, our likelihood-based estimator, \ourMLE{}, remains reliably close to the best in all scenarios. This consistency makes our \ourMLE{} \textbf{a safe and robust default} for practical deployment.
\end{tcolorbox}

\subsection{Experiments on Real-World Data} \label{sub:exp_real_dataset}

To assess the applicability of our estimators in practical settings, we evaluate their performance on two real-world datasets: \textsc{Kosarak}\footnote{\url{http://fimi.uantwerpen.be/data/}} and \textsc{ACSIncome}~\cite{ding2021retiring}. 
In both cases, we fix the number of users $N$ to the dataset size and vary the privacy parameter in the range $\epsilon \in \{1, \dots, 10\}$. 
We set the domain size $\K$ based on the number of unique values in the selected column, as described below.

\paragraph{Kosarak.}
This dataset consists of clickstream data from a Hungarian online news website. 
Each record represents a user and the set of URLs they clicked. 
We extract a single histogram (1st reported URL per user) over all 41{,}270 unique URLs, resulting in a domain size of $K = 41{,}270$.

\paragraph{ACSIncome.}
This dataset is derived from the US Census. 
We extract histograms based on two distinct attributes:
\begin{itemize}
    \item \textbf{PUMA:} The \emph{Public Use Microdata Area code}, a geographic identifier with codes ranging from 100 to 70{,}301. 
    We thus set the domain size to $K = 70{,}201$.
    
    \item \textbf{PINCP:} The \emph{Total Person's Income}, is a continuous variable ranging from 100 to 1{,}423{,}000. 
    We thus set the domain size to $K = 1{,}423{,}000$.
\end{itemize}

\paragraph{Evaluation.}
For each dataset and attribute, we evaluate the MSE as a function of the privacy parameter $\epsilon$. 
We also include the true underlying histogram for visualization. 
These experiments highlight the behavior of the estimators in high-dimensional, real-world scenarios, where the distributions can be highly skewed or sparse.

Figure~\ref{fig:realdata} summarizes the results. 
Each subfigure displays: (left) the true data distribution and (right) the MSE as a function of the privacy budget $\epsilon$, for the \textsc{Kosarak} dataset (a), and for the \textsc{ACSIncome} dataset using the \texttt{PUMA} (b) and \texttt{PINCP} (c) attributes, respectively.

\paragraph{Discussion.}
The findings from real-world datasets further reinforce the insights observed in our controlled synthetic experiments (Section~\ref{sub:exp_main_results}). 
As shown in Figure~\ref{fig:realdata}, the \textsc{ACSIncome} distributions (\texttt{PUMA} and \texttt{PINCP}) exhibit moderate concentration, similar to the synthetic setting with Zipf parameter $s = 1.3$. 
In contrast, the \textsc{Kosarak} dataset shows a highly peaked and sparse distribution, closely resembling the highly concentrated regime with $s \geq 2.5$.

In these respective settings, \INVP{} performs best under strong concentration (as in \textsc{Kosarak}), while \INVP{} and \INVN{} alternate in the moderately skewed regime (\textsc{ACSIncome}). 
It is important to note, however, that these observations are based on fixed domain size $\K$ and a fixed number of users $N$. 
Despite such fluctuations in relative performance, \ourMLE{} consistently remains close to the best-performing estimator across all cases. 
Its stable behavior across datasets, metrics, and privacy levels reinforces our central conclusion: \ourMLE{} is a robust and dependable choice, particularly in practical settings where the true data distribution is unknown and must be inferred from obfuscated data using the RR mechanism.

\begin{figure}[!htb]
    \centering
    \begin{subfigure}[b]{1\linewidth}
        \includegraphics[width=\linewidth]{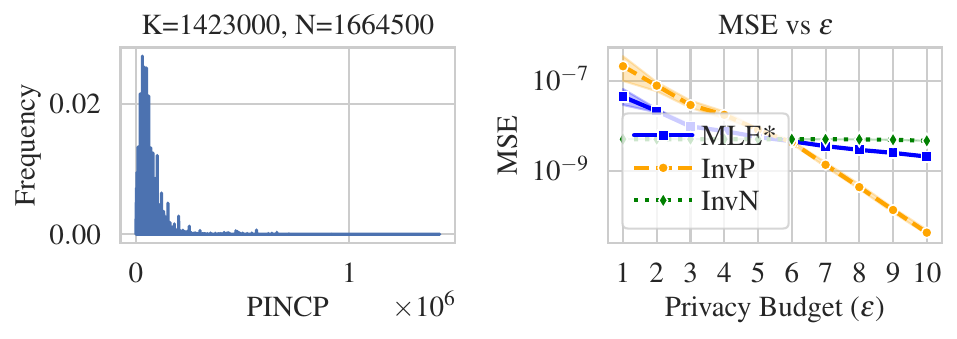}
        \caption{ACSIncome (PINCP)}
    \end{subfigure}
    \begin{subfigure}[b]{1\linewidth}
        \includegraphics[width=\linewidth]{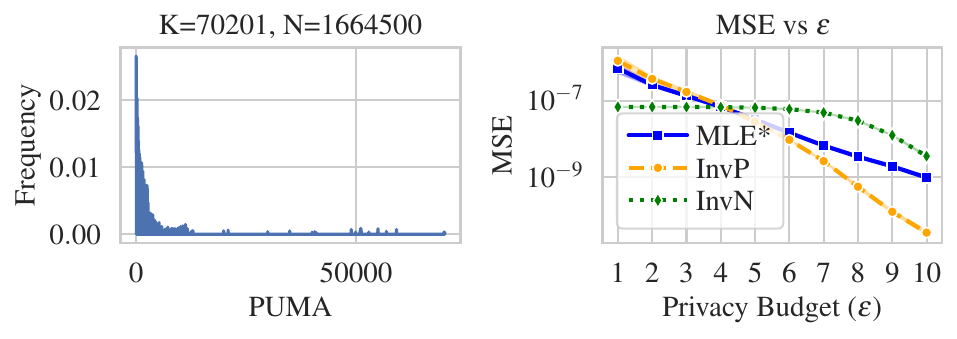}
        \caption{ACSIncome (PUMA)}
    \end{subfigure}
    \begin{subfigure}[b]{1\linewidth}
        \includegraphics[width=\linewidth]{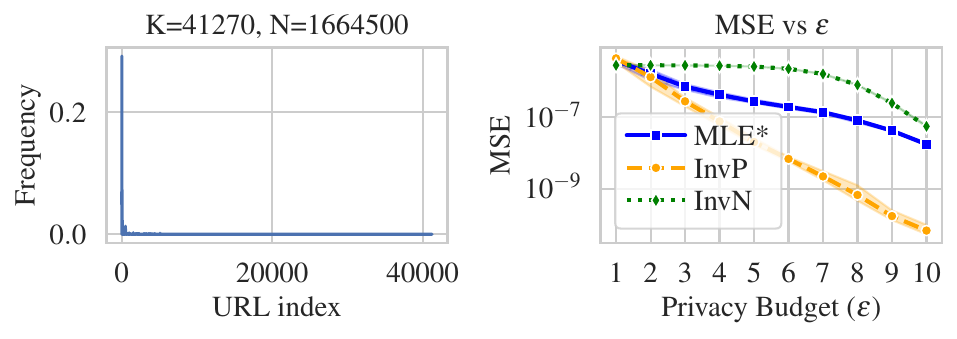}
        \caption{Kosarak}
    \end{subfigure}
    \caption{Performance of \ourMLE, \INVP, and \INVN{} across three different real-world distributions.}
    \label{fig:realdata}
\end{figure}

\subsection{The exact MLE vs. \IBU's approximation} 
\label{sub:mle_vs_ibu_convergence}
In the following, we experimentally compare our \ourMLE, which gives the 
exact MLE for the original distribution, and the existing \IBU method, which iteratively 
approximates this estimate. This comparison verifies the correctness of \ourMLE empirically, 
and also shows the computational savings that it achieves. Precisely, we measure the 
squared error between \IBU's estimate at each iteration and the exact result of  
\ourMLE. Figure \ref{fig:ibu_vs_mle}, shows the plot of this error in four 
experiments performed with two original Zipf distributions ($s = 0.01, 1.3$), two sample sizes ($n = 10^4, n= 50\times10^4$), and different privacy levels ($\epsilon = 0.5, 1.0, 2.0$). In each experiment, we sample $n$ data points from the original distribution, sanitize these samples using an RR mechanism (with $\K=500$), and finally run \IBU to approximate the MLE of the 
original distribution. 

It can be seen that the error of \IBU converges to $0$ with more iterations, hence 
confirming the correctness of \ourMLE empirically. The speed of this convergence depends 
on the setting as follows. For $n=10^4$, and a moderate level of privacy, 
e.g. $\epsilon = 1.0$, \IBU requires about $7000$ iterations to approach the exact value obtained via \ourMLE, 
while with a stronger level of privacy ($\epsilon=0.5$), 
it requires a larger number of iterations (around $25000$) to obtain the same approximation level. 
For the larger population ($n=50 \times 10^4$), many more iterations are needed for \IBU to reach a reliable approximation. In particular, the moderate privacy level $\epsilon = 1.0$ requires about $20000$ iterations, 
while for the stronger privacy ($\epsilon=0.5$), even $40000$ iterations are not enough to 
reach the same approximation precision. This indeed shows the computational superiority of \ourMLE.

%
\begin{figure}[!htb]
    \centering
    \begin{subfigure}[b]{0.49\linewidth}
        \includegraphics[width=1\linewidth]{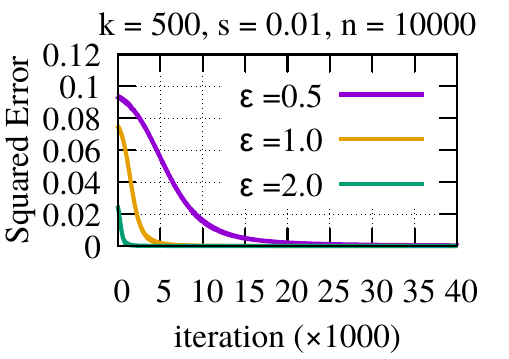}
    \end{subfigure}
    \begin{subfigure}[b]{0.49\linewidth}
        \includegraphics[width=1\linewidth]{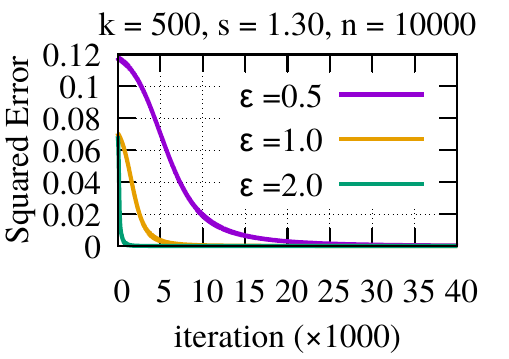}
    \end{subfigure}
    \begin{subfigure}[b]{0.49\linewidth}
        \includegraphics[width=1\linewidth]{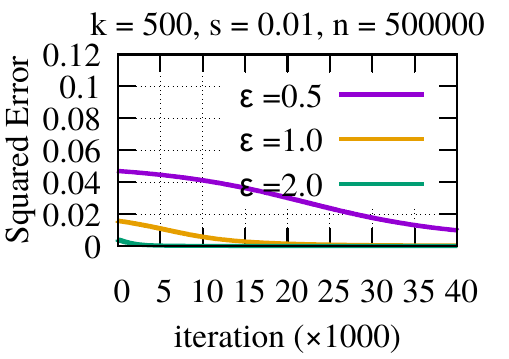}
    \end{subfigure}
    \begin{subfigure}[b]{0.49\linewidth}
        \includegraphics[width=1\linewidth]{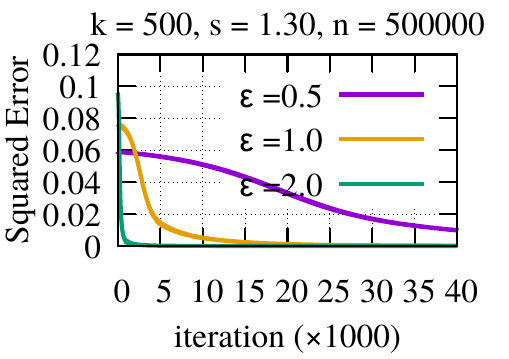}
    \end{subfigure}
    \caption{The squared error of the \IBU's estimates relative to \ourMLE, with the number of iterations.}
    \label{fig:ibu_vs_mle}
\end{figure}

\section{Conclusion} \label{sec:conclusion}

We addressed several limitations in frequency estimation under randomized response.

First, we derived a formula for the Maximum Likelihood Estimator and proved its correctness both theoretically and empirically. We also proved the uniqueness of the MLE.
Our derivation corroborates the results of \cite{kairouz2016discrete}, Supplementary Material, Section F, and, since the formula does not involve limits of any kind, it can be used to better understand the analytical properties of the MLE.
Second, we proposed an algorithm, \ourMLE, that computes the MLE in $O(\K \log \K)$ using the formula we found, vastly outperforming the iterative procedure IBU in speed while also providing an exact output as opposed to an approximation.
Third, we conducted extensive empirical comparisons of the most prominent valid estimators (\INVN, \INVP and the MLE using \ourMLE) to shed light on their behavior and trade-offs under different scenarios.
We found that in terms of mean squared error, the projection method \INVP outperforms \INVN on average when the unknown target distribution is highly concentrated and vice-versa.
Furthermore, while \INVP and \INVN outperform each other under specific conditions, \ourMLE consistently stays in between, which makes it robust in that it is never the worst of the three.
Its validity, robustness, and efficiency make \ourMLE a reliable default choice when the data distribution is unknown and performance under uncertainty matters.

Since many practical mechanisms are built upon or inspired by RR, an interesting direction for future research is to extend our analysis of the formula of \ourMLE{} to other LDP frequency estimation protocols~\cite{wang2017locally}, in particular to \emph{longitudinal LDP protocols}~\cite{erlingsson2014rappor,arcolezi2022frequency,ding2017collecting}, which protect users' data across repeated collections and are highly relevant in real-world telemetry and monitoring applications.
Another future direction is to combine different estimators to enrich the analysis. Finally, a bias-variance is included in the code repository~\cite{ourRepository}.


\bibliography{main.bib}

@article{ding2021retiring,
  title={Retiring adult: New datasets for fair machine learning},
  author={Ding, Frances and Hardt, Moritz and Miller, John and Schmidt, Ludwig},
  journal={Advances in neural information processing systems},
  volume={34},
  pages={6478--6490},
  year={2021}
}

@inproceedings{elsalamouny2020generalized,
  title={Generalized iterative bayesian update and applications to mechanisms for privacy protection},
  author={ElSalamouny, Ehab and Palamidessi, Catuscia},
  booktitle={2020 IEEE European Symposium on Security and Privacy (EuroS\&P)},
  pages={490--507},
  year={2020},
  organization={IEEE}
}

@inproceedings{kairouz2016discrete,
  title={Discrete distribution estimation under local privacy},
  author={Kairouz, Peter and Bonawitz, Keith and Ramage, Daniel},
  booktitle={International Conference on Machine Learning},
  pages={2436--2444},
  year={2016},
  organization={PMLR}
}

@inproceedings{erlingsson2014rappor,
  title={Rappor: Randomized aggregatable privacy-preserving ordinal response},
  author={Erlingsson, {\'U}lfar and Pihur, Vasyl and Korolova, Aleksandra},
  booktitle={Proceedings of the 2014 ACM SIGSAC conference on computer and communications security},
  pages={1054--1067},
  year={2014}
}

@article{warner1965randomized,
  title={Randomized response: A survey technique for eliminating evasive answer bias},
  author={Warner, Stanley L},
  journal={Journal of the American statistical association},
  volume={60},
  number={309},
  pages={63--69},
  year={1965},
  publisher={Taylor \& Francis}
}

@inproceedings{duchi2013local,
  title={Local privacy and statistical minimax rates},
  author={Duchi, John C and Jordan, Michael I and Wainwright, Martin J},
  booktitle={2013 IEEE 54th annual symposium on foundations of computer science},
  pages={429--438},
  year={2013},
  organization={IEEE}
}

@article{arcolezi2022frequency,
  title={Frequency estimation of evolving data under local differential privacy},
  author={Arcolezi, H{\'e}ber H and Pinz{\'o}n, Carlos and Palamidessi, Catuscia and Gambs, S{\'e}bastien},
  journal={arXiv preprint arXiv:2210.00262},
  year={2022}
}

@article{jacobs2021learning,
  title={Learning from What's Right and Learning from What's Wrong},
  author={Jacobs, Bart},
  journal={arXiv preprint arXiv:2112.14045},
  year={2021}
}

@misc{pinzón2025jeffreysupdateruleminimizer,
      title={Jeffrey's update rule as a minimizer of Kullback-Leibler divergence}, 
      author={Carlos Pinzón and Catuscia Palamidessi},
      year={2025},
      eprint={2502.15504},
      archivePrefix={arXiv},
      primaryClass={stat.ML},
      url={https://arxiv.org/abs/2502.15504}, 
}

@inproceedings{agrawal2005privacy,
  title={Privacy preserving OLAP},
  author={Agrawal, Rakesh and Srikant, Ramakrishnan and Thomas, Dilys},
  booktitle={Proceedings of the 2005 ACM SIGMOD international conference on Management of data},
  pages={251--262},
  year={2005}
}

@inproceedings{agrawal2001design,
  title={On the design and quantification of privacy preserving data mining algorithms},
  author={Agrawal, Dakshi and Aggarwal, Charu C},
  booktitle={Proceedings of the twentieth ACM SIGMOD-SIGACT-SIGART symposium on Principles of database systems},
  pages={247--255},
  year={2001}
}

@inproceedings{cheu2019distributed,
  title={Distributed differential privacy via shuffling},
  author={Cheu, Albert and Smith, Adam and Ullman, Jonathan and Zeber, David and Zhilyaev, Maxim},
  booktitle={Annual international conference on the theory and applications of cryptographic techniques},
  pages={375--403},
  year={2019},
  organization={Springer}
}

@inproceedings{erlingsson2019amplification,
  title={Amplification by shuffling: From local to central differential privacy via anonymity},
  author={Erlingsson, {\'U}lfar and Feldman, Vitaly and Mironov, Ilya and Raghunathan, Ananth and Talwar, Kunal and Thakurta, Abhradeep},
  booktitle={Proceedings of the Thirtieth Annual ACM-SIAM Symposium on Discrete Algorithms},
  pages={2468--2479},
  year={2019},
  organization={SIAM}
}

@article{dempster1977maximum,
  title={Maximum likelihood from incomplete data via the EM algorithm},
  author={Dempster, Arthur P and Laird, Nan M and Rubin, Donald B},
  journal={Journal of the royal statistical society: series B (methodological)},
  volume={39},
  number={1},
  pages={1--22},
  year={1977},
  publisher={Wiley Online Library}
}

@inproceedings{wang2017locally,
  title={Locally differentially private protocols for frequency estimation},
  author={Wang, Tianhao and Blocki, Jeremiah and Li, Ninghui and Jha, Somesh},
  booktitle={26th USENIX Security Symposium (USENIX Security 17)},
  pages={729--745},
  year={2017}
}

@article{ding2017collecting,
  title={Collecting telemetry data privately},
  author={Ding, Bolin and Kulkarni, Janardhan and Yekhanin, Sergey},
  journal={Advances in Neural Information Processing Systems},
  volume={30},
  year={2017}
}

@article{berger1990inadmissibility,
  title={On the inadmissibility of unbiased estimators},
  author={Berger, James O},
  journal={Statistics \& probability letters},
  volume={9},
  number={5},
  pages={381--384},
  year={1990},
  publisher={Elsevier}
}

@inproceedings{ye2025revisiting,
  title={Revisiting EM-based Estimation for Locally Differentially Private Protocols},
  author={Ye, Yutong and Wang, Tianhao and Zhang, Min and Feng, Dengguo},
  booktitle={32nd Annual Network and Distributed System Security Symposium, NDSS},
  pages={24--28},
  year={2025}
}

@misc{elsalamouny2025consistencyperformanceiterativebayesian,
  title={On the Consistency and Performance of the Iterative Bayesian Update}, 
  author={Ehab ElSalamouny and Catuscia Palamidessi},
  year={2025},
  eprint={2508.09980},
  archivePrefix={arXiv},
  primaryClass={cs.CR},
  url={https://arxiv.org/abs/2508.09980}, 
}

@article{hay2009boosting,
  title={Boosting the accuracy of differentially-private histograms through consistency},
  author={Hay, Michael and Rastogi, Vibhor and Miklau, Gerome and Suciu, Dan},
  journal={arXiv preprint arXiv:0904.0942},
  year={2009}
}

@inproceedings{lee2015maximum,
  title={Maximum likelihood postprocessing for differential privacy under consistency constraints},
  author={Lee, Jaewoo and Wang, Yue and Kifer, Daniel},
  booktitle={Proceedings of the 21th ACM SIGKDD International Conference on Knowledge Discovery and Data Mining},
  pages={635--644},
  year={2015}
}

@misc{ourRepository,
  author = {Pinzón, Carlos and H. Arcolezi, Heber and ElSalamouny, Ehab and Massot, Lucas},
  year = {2025},
  title = {Code Repository for the paper Estimating the True Distribution of Data Collected with Randomized Response},
  howpublished = {\url{https://github.com/caph1993/mle-of-randomized-response}},
  note = {Accessed: 2025-11-15}
}

\newpage
\begin{center}
{\fontsize{24}{48}\selectfont \bfseries Supplementary Material}
\end{center}
\vspace{2em} 

\renewcommand{\thesection}{\Alph{section}}
\setcounter{section}{0}
\setcounter{secnumdepth}{2}

\section{Code for reproducibility}\label{SM:sec:repository}

The implementation of the estimators and the code needed to reproduce all the results and plots presented in this paper can be found in the Zip file of Supplementary Material.


For quick reference, a simple implementation in Python (zero-indexed) of \ourMLE and \IBU is presented below.

{\fontsize{8}{8}\selectfont
\begin{verbatim}
import numpy as np

def MLE(phi, p: float):
    phi = np.asarray(phi)
    K = len(phi); q = (1-p) / (K-1)
    assert p >= q and q >= 0
    sig = np.argsort(phi)
    i = 0; s = 1
    while i<K and q*s > (1-i*q) * phi[sig[i]]:
        s -= phi[sig[i]]; i += 1
    lam = (p - q) * s / (1 - i*q)
    theta = np.zeros(K)
    theta[sig[i:]] = phi[sig[i:]]/lam - q/(p-q)
    return theta


def IBU(phi, p: float, n_iters=40000):
    phi = np.asarray(phi)
    K = len(phi); q = (1-p) / (K-1)
    d = p - q
    theta = np.ones(K) / K
    for i in range(n_iters):
        Z = phi / (q + d * theta)
        theta = theta * (q * np.sum(Z) + d * Z)
    return theta
\end{verbatim}
}

\section{Proof and derivation of the formula}

\def\M{{\mathcal M}}

Let $\Delta$ be the set of all probability distributions over $\K$ categories, i.e., the simplex $\Delta = \set{\th\in\bbR^\K: \sum_i \th_i = 1, \th_i\geq 0}$.

For $\th\in\Delta$, the vector $\M(\th)\in\Delta$ denotes the expected compound distribution of the output $Y\in\set{1..\K}$ of the randomized-response mechanism $\M$ (governed by some fixed $p$ and $q$) when the input $X$ is sampled from $\th$, i.e., if $X\sim\th$ and $Y\sim\M(X)$, then $\M(\th)_y = p_Y(y) = \sum_x \th_x\; p_{Y|x}(y) = q + (p-q) \th_y$.

If $\ph\in\Delta$ is the empirical histogram of $N$ observed outputs $y_i\in\{1..\K\}$ coming from $\M$, then the \emph{likelihood} of some $\th\in\Delta$ being the original input distribution given $\ph$ is $L(\th|\ph) := p(\ph | \th) := \prod_{i=1}^N p_Y(y_i)$.
If $\th$ maximizes the likelihood in $\Delta$, we say that $\th$ is a \emph{Maximum Likelihood Estimator} (MLE) for $\ph$.

For $\ph,\th\in\Delta$ the \emph{Kullback-Leibler divergence} is given by $D_{KL}(\ph, \th) := \sum_{i=1}^\K \ph_i \log\left(\frac{\ph_i}{\th_i}\right)$, with convention $0\log\frac{0}{0}=0$.

To begin, we prove that Maximizing likelihood is equivalent to minimizing KL divergence.

\begin{theorem}\label{SM:thm:DKL}
    Maximizing likelihood is equivalent to minimizing KL divergence in the following sense:
    \[
    \setargmax_{\th\in\Delta} L(\th|\ph) = \setargmin_{\th\in\Delta} D_{KL}(\ph, \M(\th))
    \]
\end{theorem}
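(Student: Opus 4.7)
The plan is to reduce both sides to the same optimization problem by taking logarithms and isolating the $\th$-dependent pieces. Since $\ph$ is the empirical histogram of $N$ observations, the multiset $\{y_1,\dots,y_N\}$ contains value $y$ exactly $N\ph_y$ times. Therefore I would first rewrite the log-likelihood as
\[
\log L(\th|\ph) \;=\; \sum_{i=1}^N \log p_Y(y_i) \;=\; N \sum_{y=1}^\K \ph_y \log \M(\th)_y,
\]
using the definition $p_Y(y) = \M(\th)_y = q + (p-q)\th_y$ established in the preamble. The right-hand side is $N$ times the negative cross-entropy between $\ph$ and $\M(\th)$.

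Next I would expand the KL divergence as
\[
D_{KL}(\ph, \M(\th)) \;=\; \sum_{y=1}^\K \ph_y \log \ph_y \;-\; \sum_{y=1}^\K \ph_y \log \M(\th)_y,
\]
observing that the first sum is the negative Shannon entropy of $\ph$, which is a constant in $\th$. Hence $D_{KL}(\ph, \M(\th))$ and $-\frac{1}{N}\log L(\th|\ph)$ differ only by a $\th$-independent additive constant, so they attain their minima on $\Delta$ at exactly the same set of arguments, and the minima of the KL coincide with the maxima of $L(\th|\ph)$ because $\log$ is strictly increasing.

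There is no substantial obstacle here; the only thing worth being careful about is the handling of the $0\log 0$ and $0\log\frac{0}{0}$ conventions used in the definition of $D_{KL}$, which I would address by noting that for any $y$ with $\ph_y = 0$ the corresponding term contributes $0$ to both sums, while for $y$ with $\ph_y > 0$ one has $\M(\th)_y \geq q > 0$ (since $p > 0$ implies $q \geq 0$ and the support of $\M(\th)$ is the whole of $\{1,\dots,\K\}$), so all logarithms are well-defined. The equality of the argmax-sets then follows immediately from the constant-shift argument above.
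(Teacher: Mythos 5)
Your argument is correct and is essentially the paper's own proof: both isolate $f(\th)=\sum_y \ph_y\log\M(\th)_y$, write $\log L(\th|\ph)=N f(\th)$ and $D_{KL}(\ph,\M(\th))=\sum_y\ph_y\log\ph_y-f(\th)$, and conclude via the constant shift and monotonicity of $\log$. Your extra remark on the $0\log 0$ convention is a harmless refinement the paper omits (note only that $q>0$ requires $p<1$, which holds in any nondegenerate RR instance).
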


\begin{proof}
    For fixed $\ph$, let $f(\th) := \sum_y \ph_y \log \M(\th)_y$.
    Notice that $D_{KL}(\ph, \M(\th)) = \sum_y \ph_y \log \ph_y - f(\th)$ and
    \[
    \begin{aligned}
        &L(\th | \ph) = \prod_{i=1}^N p_Y(y_i) = \prod_{y=1}^{\K} p_Y(y)^{|\set{i:y_i{=}y}|}\\
        &= (\prod_y p_Y(y)^{\ph_y})^N = \exp\left( N \sum_y \ph_y \log p_Y(y)\right)\\
        &= \exp\left( N\, f(\th) \right)\\
    \end{aligned}
    \]
    So, $D_{KL}(\ph, \M(\th))$ is minimized if and only if $f(\th)$ is maximized, which is equivalent to maximizing $L(\th|\ph)$.
\end{proof}

We now use the Lagrange Multipliers method to derive a formula that characterizes any given MLE.

\begin{theorem}\label{SM:thm:lagrange}
    Let $\ph,\th\in\Delta$ and let $\lambda := \frac{(p-q) \sum_{\th_i>0} \ph_i}{1- |\set{i:\th_i{=}0}| q}$.
    If $\th$ minimizes $D_{KL}(\ph, \M(\th))$ in $\Delta$, then, for every $i$, either $\th_i=0$ or $\th_i = \frac{\ph_i}{\lambda} - \frac{q}{p-q}$.
\end{theorem}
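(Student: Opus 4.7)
My plan is to recast the minimization of $D_{KL}(\ph, \M(\th))$ as the maximization of $f(\th) := \sum_i \ph_i \log(q + (p-q)\th_i)$ over the simplex $\Delta$, which is legitimate by Theorem~\ref{SM:thm:DKL}, and then apply the KKT conditions. Since the constraints $\sum_i \th_i = 1$ and $\th_i \geq 0$ are all affine, the linear independence constraint qualification holds everywhere on $\Delta$, so KKT is necessary at any maximizer; moreover $f$ is concave (each summand is a log of a positive affine function of $\th_i$), which will also give sufficiency and uniqueness of $\mu$ as a byproduct if needed.

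Introducing a multiplier $\mu \in \bbR$ for the equality constraint and $\nu_i \geq 0$ for each $\th_i \geq 0$, stationarity reads
\[
    \frac{\ph_i (p-q)}{q + (p-q)\th_i} - \mu + \nu_i = 0 \quad \text{for all } i,
\]
together with complementary slackness $\nu_i \th_i = 0$. For indices with $\th_i > 0$, this forces $\nu_i = 0$ and, solving for $\th_i$, yields $\th_i = \ph_i/\mu - q/(p-q)$, which is exactly the claimed formula once I identify $\mu$ with $\lambda$; for indices with $\th_i = 0$ there is nothing to prove.

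To pin down $\mu$, I plug the derived formula into $\sum_i \th_i = 1$. Writing $k = |\set{i : \th_i > 0}|$, this gives $\mu = (p-q)\,(\sum_{\th_i>0} \ph_i)\,/\,((p-q) + k q)$. The key algebraic step is to use the RR identity $p + (K-1)q = 1$, equivalently $p - q = 1 - K q$, to rewrite the denominator as $(p-q) + k q = 1 - (K-k)q = 1 - |\set{i:\th_i=0}|\,q$, which is exactly the denominator in the stated $\lambda$. Thus $\mu = \lambda$ and the proof is complete.

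The main obstacle is that last algebraic identification linking the KKT multiplier to the closed-form $\lambda$ in the statement; it relies crucially on the RR-specific relation between $p$, $q$, and $K$, and is easy to miss if one works with $\mu$ in its raw KKT form. A minor subtlety is checking that $\lambda$ is well-defined: since $\th \in \Delta$ sums to $1$, at least one coordinate is positive, so $|\set{i : \th_i = 0}| \leq K - 1$, and $1 - (K-1)q = p > 0$ guarantees the denominator is strictly positive (and the logarithms in $f$ are well-defined on $\Delta$ since $q + (p-q)\th_i \geq \min(p,q) > 0$ whenever $\th_i \in [0,1]$ and $p>0$).
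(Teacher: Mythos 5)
Your proposal is correct and follows essentially the same route as the paper's proof: both reduce to maximizing $\sum_i \ph_i \log(q+(p-q)\th_i)$ over $\Delta$, obtain the stationarity condition $\frac{\ph_i(p-q)}{q+(p-q)\th_i}=\lambda$ on the positive coordinates, and identify the multiplier via the normalization $\sum_i\th_i=1$ together with $p+(K-1)q=1$. The only cosmetic difference is that you handle the nonnegativity constraints with KKT complementary slackness, whereas the paper fixes the zero set and reparametrizes the positive coordinates as $\th_{\alpha_i}=e^{u_i}$ before applying plain Lagrange multipliers.
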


\begin{proof}
To apply the theorem of the Lagrange multipliers method, we will split the constraint $\th_i\geq 0$ into $\th_i=0$ for some $i$ and $\th_i=e^{u_i}$ for some others in the following sense.
List the indices $\alpha_1,...,\alpha_m$ (here $m<\K$) and $\beta_1,...,\beta_{\K-m}$ with $\th_{\alpha_i}>0$ and $\th_{\beta_i}>0$.
For every vector $u\in \bbR^m$, let $\th = f(u)\in\bbR^\K$ be given by $f(u)_{\alpha_i} = e^{u_i}$ and $f(u)_{\beta_i} = 0$.
\def\thai{{\th_{\alpha_i}}}

The minimizer $u^*$ of $\ell(u):=\DKL(\ph, \th)$ constrained to $\sum_i \th_i = 1$ can be found with the Lagrange multipliers method: $(u^*, \lambda^*)$ must be a critical point of the Lagrange function $\mathcal L (u, \lambda) = \ell(u) - \lambda (1 - \sum_i \th_i)$.
Notice that
\[
 \frac{\partial \mathcal L}{\partial u_i} = \frac{\partial \mathcal L}{\partial\thai} \frac{\partial \thai}{\partial u_i}
 = \left(-\frac{\ph_{\alpha_i} (p-q)}{q+(p-q) \thai} + \lambda \right)\thai,
\]
and $\frac{\partial \mathcal L}{\partial \lambda} = 1 - \sum_i \th_i$.
Therefore, from $\nabla_{u,\lambda}\mathcal L(u, \lambda)|_{u^*,\lambda^*} = 0$ and the fact that $\thai>0$ by definition, it follows that the minimizer $\th:=f(u^*)$ is given by $\th_{\alpha_i} = \frac{\ph_{\alpha_i}}{\lambda^*} - \frac{q}{p-q}$ and $\th_{\beta_i} = 0$, and $\lambda^* = \frac{(p-q) \sum_{\th_i>0} \ph_i}{(p-q)+|\set{i:\th_i{>}0}|q}$.
Finally, from $p+(\K-1)q=1$, the denominator becomes simply $1- |\set{i:\th_i{=}0}| q$.
\end{proof}

With Theorem~\ref{SM:thm:lagrange}, the problem of computing all the components of $\hat\th$ is reduced into identifying the components $i$ in which $\hat\th_i=0$ because $\lambda$ and all the other components can be computed with the given formulas.
The rest of the proof is devoted to identifying the zero-valued components.

The problem might be reduced even further because, as shown in Theorem~\ref{SM:thm:permutations}, permuting $\ph$ simply permutates the MLE.
Hence, if we knew how to find an MLE for every sorted $\ph$, we could derive a method to find an MLE for every $\ph$ simply by permutating the input and then the output in the same way.

\begin{theorem}\label{SM:thm:permutations}
    (Permutations can be abstracted)
    Fix some $\ph\in\Delta$ satisfying $\ph_1\leq\cdots\leq\ph_\K$, some permutation $\sigma:\set{1..\K}\to\set{1..\K}$, and let $f_\sigma:\Delta\to\Delta$ be the permutation function $f_\sigma(\th)_i = \th_{\sigma(i)}$.
    If $\th$ is an MLE for $\ph$, then $f_\sigma(\th)$ is an MLE for $f_\sigma(\ph)$ (and vice-versa via $f_{\sigma^{-1}}$).
\end{theorem}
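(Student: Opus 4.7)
The plan is to reduce the statement to the invariance of the Kullback--Leibler divergence and of the RR channel $\M$ under simultaneous relabeling of categories, and then invoke Theorem~\ref{SM:thm:DKL} to transfer the invariance from the KL formulation back to the likelihood formulation.

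First I would establish two ``commutation'' lemmas that are nearly immediate from the definitions. (i) The mechanism commutes with permutations: since $\M(\th)_y = q + (p-q)\th_y$ depends on $\th$ only through the $y$-th coordinate, one has $\M(f_\sigma(\th))_y = q + (p-q)\th_{\sigma(y)} = \M(\th)_{\sigma(y)} = f_\sigma(\M(\th))_y$, so $\M \circ f_\sigma = f_\sigma \circ \M$. (ii) The KL divergence is invariant under simultaneous permutation of its two arguments: a change of summation index $j=\sigma(i)$ gives
\[
D_{KL}(f_\sigma(a), f_\sigma(b)) \;=\; \sum_i a_{\sigma(i)} \log\frac{a_{\sigma(i)}}{b_{\sigma(i)}} \;=\; \sum_j a_j \log\frac{a_j}{b_j} \;=\; D_{KL}(a, b),
\]
for any $a,b\in\Delta$. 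I would also note that $f_\sigma:\Delta\to\Delta$ is a bijection with inverse $f_{\sigma^{-1}}$, so minimization over $\Delta$ is unaffected by pre-composing with $f_\sigma$.

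Combining (i) and (ii) yields the key identity, for every $\th\in\Delta$:
\[
D_{KL}(f_\sigma(\ph),\, \M(f_\sigma(\th))) \;=\; D_{KL}(f_\sigma(\ph),\, f_\sigma(\M(\th))) \;=\; D_{KL}(\ph,\, \M(\th)).
\]
Now suppose $\th$ is an MLE for $\ph$. By Theorem~\ref{SM:thm:DKL}, $\th$ minimizes $g(\cdot) := D_{KL}(\ph, \M(\cdot))$ over $\Delta$. Writing $h(\cdot) := D_{KL}(f_\sigma(\ph), \M(\cdot))$, the identity above says $h \circ f_\sigma = g$; since $f_\sigma$ is a bijection of $\Delta$, any minimizer $\th$ of $g$ yields a minimizer $f_\sigma(\th)$ of $h$. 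Applying Theorem~\ref{SM:thm:DKL} again in the reverse direction, $f_\sigma(\th)$ is an MLE for $f_\sigma(\ph)$. The ``vice versa'' part follows by replacing $\sigma$ with $\sigma^{-1}$ and using $f_{\sigma^{-1}} \circ f_\sigma = \mathrm{id}$.

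There is no substantial obstacle here; the content of the theorem is simply that the RR estimation problem is equivariant under category relabeling. The only place requiring a little care is bookkeeping the direction of the permutation (distinguishing the actions of $f_\sigma$ and $f_{\sigma^{-1}}$), to make sure the identity $h\circ f_\sigma = g$ is oriented correctly so that minimizers of $g$ transport to minimizers of $h$ rather than the other way around.
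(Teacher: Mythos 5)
Your proposal is correct and follows essentially the same route as the paper's proof: both establish that $\M$ commutes with $f_\sigma$ and that the KL sum is invariant under simultaneous reindexing, yielding $D_{KL}(f_\sigma(\ph), \M(f_\sigma(\th))) = D_{KL}(\ph, \M(\th))$. The only difference is that you spell out explicitly the final transport-of-minimizers step via Theorem~\ref{SM:thm:DKL} and the bijectivity of $f_\sigma$, which the paper leaves implicit.
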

\begin{proof}
    Notice that $\M(f_\sigma(\th))_i = q + (p-q) f_\sigma(\th)_i = q + (p-q) \th_{\sigma(i)} = \M(\th)_{\sigma_i}$.
    Therefore,
    \[
    \begin{aligned}
        D_{KL}(f_\sigma(\ph), \M(f_\sigma(\th))) &= \sum_i \ph_{\sigma(i)} \log\frac{\ph_{\sigma(i)}}{\M(\th)_{\sigma(i)}}\\
        &= \sum_j \ph_j \log\frac{\ph_j}{\M(\th)_j}\\
        &= D_{KL}(\ph, \M(\th)).
    \end{aligned}
    \]
\end{proof}

Furthermore, as shown in Lemma~\ref{SM:lemma:monotonicity} and Theorem~\ref{SM:thm:monotonicity}, the entries in $\hat\th$ are monotonic with respect to the entries in $\ph$.

\begin{lemma}\label{SM:lemma:monotonicity}
    (Monotonicity)
    If two distributions $\th, \th'\in\Delta$ differ only at two swapped positions $j$ and $k$ with $a:=\th_j=\th'_k < b:=\th_k=\th'_j$, and $\ph_j<\ph_k$ then $D_{KL}(\ph, \M(\th)) < D_{KL}(\ph, \M(\th'))$.
\end{lemma}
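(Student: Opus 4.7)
\noindent\textit{Proof proposal.} The plan is a direct computation that isolates the two entries where $\th$ and $\th'$ differ. Since $\M$ acts coordinate-wise as $\M(\th)_i = q + (p-q)\th_i$, it preserves the property that $\M(\th)$ and $\M(\th')$ agree everywhere except at positions $j$ and $k$. Introduce the shorthand $A := q + (p-q)a$ and $B := q + (p-q)b$, so that $\M(\th)_j = \M(\th')_k = A$ and $\M(\th)_k = \M(\th')_j = B$, with $A < B$ because $a < b$ and $p > q$ (the RR mechanism is assumed to be non-trivial, so $p-q > 0$).

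Next, I would form the difference $D_{KL}(\ph,\M(\th')) - D_{KL}(\ph,\M(\th))$ term by term. Every summand with index $i \notin \{j,k\}$ is identical on both sides and cancels, and the terms $\ph_j \log \ph_j$ and $\ph_k \log \ph_k$ also cancel. What remains factors cleanly into
\[
D_{KL}(\ph,\M(\th')) - D_{KL}(\ph,\M(\th)) = (\ph_k - \ph_j)(\log B - \log A).
\]
By hypothesis $\ph_k - \ph_j > 0$, and by the observation above $\log B - \log A > 0$, so the right-hand side is strictly positive, yielding the claim.

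The proof is essentially mechanical once one writes out the two summands; the only subtlety worth flagging is the implicit use of $p > q$ (equivalently $\epsilon > 0$) to ensure $A < B$ strictly. If this bound degenerates to equality, the swap becomes a no-op at the level of $\M(\th)$ and the statement would only hold with weak inequality; under the paper's standing assumption that RR provides non-trivial privacy, the strict version holds. I do not foresee any real obstacle — the argument is a one-line algebraic identity plus monotonicity of $\log$, and it immediately feeds into Theorem~\ref{thm:monotonicity} by iterating adjacent swaps, in the spirit of a bubble-sort argument.
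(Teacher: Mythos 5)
Your proposal is correct and follows essentially the same route as the paper's proof: isolate the two differing coordinates, cancel everything else, and factor the difference of divergences into $(\ph_k-\ph_j)\log\frac{q+(p-q)b}{q+(p-q)a}$, a product of two strictly positive quantities. Your remark about needing $p>q$ for strictness is a fair (and correct) observation that the paper leaves implicit.
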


\begin{proof}
Let $\delta := D_{KL}(\ph, \M(\th)) - D_{KL}(\ph, \M(\th'))$.
\[
\begin{aligned}
    \delta &= \sum_i \ph_i \log\frac{q + (p-q)\th'_i}{q + (p-q) \th_i}\\
    &= \ph_j \log\frac{q + (p-q)b}{q + (p-q)a} + \ph_k \log\frac{q + (p-q)a}{q + (p-q)b}\\
    &= (\ph_j - \ph_k) \log\frac{q + (p-q)b}{q + (p-q)a}.
\end{aligned}
\]
Since $\ph_j<\ph_k$ and $b>a$, we have $\delta< 0$.
\end{proof}

\begin{theorem}\label{SM:thm:monotonicity}
    (Monotonicity)
    If $\th$ is an MLE for $\ph$, then for any $i,j\in\set{1..\K}$, if $\ph_i\leq\ph_j$ then $\th_i\leq\th_j$.
\end{theorem}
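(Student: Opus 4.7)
The plan is to prove the statement by contradiction, using Lemma~\ref{SM:lemma:monotonicity} as the key tool and Theorem~\ref{SM:thm:DKL} to translate the MLE hypothesis into KL-divergence minimization. By Theorem~\ref{SM:thm:DKL}, since $\th$ is an MLE for $\ph$, it lies in $\setargmin_{\th'\in\Delta}\DKL(\ph,\M(\th'))$. So it suffices to show: if $\th$ minimizes $\DKL(\ph,\M(\cdot))$ over $\Delta$ and $\ph_i<\ph_j$, then $\th_i\leq\th_j$; the case $\ph_i=\ph_j$ is handled at the end.

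First I would fix indices $i,j$ with $\ph_i<\ph_j$ and assume for contradiction that $\th_i>\th_j$. I would then construct $\th'\in\Delta$ from $\th$ by swapping the $i$-th and $j$-th coordinates, i.e. $\th'_i:=\th_j$, $\th'_j:=\th_i$, and $\th'_k:=\th_k$ for all $k\notin\{i,j\}$. Since the swap preserves both non-negativity and the sum-to-one constraint, $\th'\in\Delta$, so it is an admissible competitor in the minimization problem.

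Next I would invoke Lemma~\ref{SM:lemma:monotonicity} with the lemma's $(\th,\th')$ playing the role of our $(\th',\th)$ and the lemma's position pair $(j,k)$ corresponding to our $(i,j)$. With these identifications, $a:=\th'_i=\th_j<\th_i=\th'_j=:b$ (using the contradiction hypothesis) and the lemma's auxiliary hypothesis $\ph_i<\ph_j$ holds by assumption. The lemma then yields $\DKL(\ph,\M(\th'))<\DKL(\ph,\M(\th))$, contradicting the minimality of $\th$. Hence $\th_i\leq\th_j$ whenever $\ph_i<\ph_j$.

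Finally, for the tie case $\ph_i=\ph_j$, inspecting the calculation inside the proof of Lemma~\ref{SM:lemma:monotonicity} shows that the factor $(\ph_j-\ph_k)$ vanishes, so the swap $\th\mapsto\th'$ leaves $\DKL(\ph,\M(\cdot))$ unchanged, and the desired non-strict inequality $\th_i\leq\th_j$ can be arranged without loss of generality among the set of MLEs. The whole argument is short; the only point requiring care is the index bookkeeping when applying Lemma~\ref{SM:lemma:monotonicity}, since one must correctly match the lemma's labelling convention (which declares $\th$ to be the lower-divergence distribution with the smaller entry at the smaller-$\ph$ position) to the present contradiction setup.
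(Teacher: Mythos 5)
Your proof is correct and is exactly the argument the paper intends: the paper's own proof of this theorem is literally ``an immediate consequence of Lemma~\ref{SM:lemma:monotonicity}'', and your swap-and-contradict construction (after translating the MLE hypothesis into KL-divergence minimization via Theorem~\ref{SM:thm:DKL}) is precisely how that one-liner unpacks, index bookkeeping included. Your handling of the tie case $\ph_i=\ph_j$ is slightly informal (``arranged without loss of generality'' does not by itself give the claim for \emph{every} MLE), but the paper's proof does not address that case at all, so you are if anything more careful than the original.
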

\begin{proof}
    This is an immediate consequence of Lemma~\ref{SM:lemma:monotonicity}.
\end{proof}

Due to Theorem~\ref{SM:thm:monotonicity}, if $\hat\th$ is an MLE for $\ph$, then the components of $\hat\th$ preserve the ordering (by value) of the corresponding components in $\ph$.
Therefore, the zero components of $\hat\th$ must align with the positions of the $n$ smallest components of $\ph$, for some unknown $n\in\set{0,...,\K-1}$.
In particular, when $\ph$ is sorted, the zeros of $\hat\th$ must occur in the first $n$ coordinates.
This largely reduces the search space for $\hat\th$, because if we let $\th\nth$ be the result of applying the formula of Theorem~\ref{SM:thm:lagrange} (Definition~\ref{SM:def:nth}), then it must hold that $\hat\th$ equals $\th\nth$ for some $n$ (Lemma~\ref{SM:lemma:optimal-is-nth}).

\begin{definition}\label{SM:def:nth}
    Fix some $\ph\in\Delta$ satisfying $\ph_1\leq\cdots\leq\ph_\K$, and some $n\in\set{0..\K-1}$.
    Define $\lambda\nth := \frac{(p-q) \sum_{i>n} \ph_i}{1-n q}$, and let $\th\nth$ be the $\K$-dimensional vector given by\[
        \th\nth_i := \begin{cases}
        0 &\text{ if }i\leq n\\
        \frac{\ph_i}{\lambda\nth} - \frac{q}{p-q} &\text{ otherwise.}
        \end{cases}
    \]
    Notice that $\sum_i\th\nth_i = 1$, but it is not necessarily the case that $\th\nth\in\Delta$ because $\th\nth_i < 0$ may occur.
    Notice also that \[
    \M(\th\nth)=
    \begin{cases}
        q &\text{ if }i\leq n\\
        \frac{(p-q)\ph_i}{\lambda\nth}&\text{ otherwise.}        
    \end{cases}
    \]
\end{definition}

\begin{lemma}\label{SM:lemma:optimal-is-nth}
    Fix some $\ph\in\Delta$ satisfying $\ph_1\leq\cdots\leq\ph_\K$.
    If $\th$ is an MLE for $\ph$, then there exists some $n\in\set{0,...,\K-1}$ such that $\th = \th\nth$.
\end{lemma}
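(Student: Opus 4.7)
The plan is to combine the two main structural results already established for MLEs, namely the Lagrange characterization (Theorem~\ref{SM:thm:lagrange}) and the monotonicity property (Theorem~\ref{SM:thm:monotonicity}), and to read off the index $n$ directly from the zero pattern of the given MLE $\th$.

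First I would define $n := |\set{i : \th_i = 0}|$ and argue that, thanks to the assumption $\ph_1\leq\cdots\leq\ph_\K$, the zero coordinates of $\th$ must be exactly the first $n$ indices $\set{1,\dots,n}$. The key observation is the contrapositive of Theorem~\ref{SM:thm:monotonicity}: if $\th_i = 0$ and $\th_j > 0$, then $\ph_i \leq \ph_j$ (otherwise monotonicity would force $\th_i \geq \th_j > 0$). Hence every zero index of $\th$ has a $\ph$-value no larger than that of every nonzero index. Because $\ph$ is sorted, this forces the zero indices to form an initial segment $\set{1,\dots,n}$. A small wrinkle worth writing out explicitly is the tie-breaking at the boundary: if $\ph_n = \ph_{n+1}$, then monotonicity in both directions forces $\th_n = \th_{n+1}$, contradicting that position $n$ is zero while $n+1$ is not; so there is no ambiguity in how the initial segment is chosen.

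Next I would apply Theorem~\ref{SM:thm:lagrange} on the nonzero coordinates. Since the zero set is $\set{1,\dots,n}$ and the nonzero set is $\set{n+1,\dots,\K}$, the theorem gives $\th_i = \ph_i/\lambda - q/(p-q)$ for $i>n$ with
\[
\lambda = \frac{(p-q)\sum_{\th_i>0}\ph_i}{1-|\set{i:\th_i{=}0}|\,q} = \frac{(p-q)\sum_{i>n}\ph_i}{1-nq},
\]
which is exactly $\lambda\nth$ from Definition~\ref{SM:def:nth}. Matching coordinate by coordinate against $\th\nth$ then yields $\th = \th\nth$, completing the proof.

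I do not expect a real obstacle here: both supporting theorems do all the work, and this lemma is essentially just the bookkeeping that translates ``some subset of indices is zeroed'' into ``an initial segment of the sorted indices is zeroed''. The only point that requires any care is the tie-handling at the boundary described above, which is why I would state it as a brief separate remark rather than glossing over it. Once the initial-segment structure is in place, the value of $\lambda$ drops out immediately and the identification $\th=\th\nth$ is automatic.
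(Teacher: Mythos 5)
Your proof is correct and follows essentially the same route as the paper's: define $n$ as the number of zeros of $\th$, use the monotonicity theorem to place those zeros in the first $n$ (sorted) positions, and then invoke the Lagrange characterization to identify the remaining coordinates with $\th\nth$. The paper reaches the initial-segment step slightly more directly, by observing that monotonicity together with the sortedness of $\ph$ makes $\th$ itself sorted, but your contrapositive argument with the tie-handling remark arrives at the same conclusion.
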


\begin{proof}
    By Theorem~\ref{SM:thm:monotonicity}, we have $\th_1\leq\cdots\leq\th_\K$.
    Let $n = |\set{i:\th_i=0}|$.
    Notice that $\th_i=0$ iff $\th\nth_i = 0$, because ($\rightarrow$) if $\th_i=0$, then $i\leq n$ and $\th\nth_i = 0$, and ($\leftarrow$) if $\th_i>0$, then $i>n$ and $\th\nth_i>0$.
    From Theorem~\ref{SM:thm:lagrange}, it follows that $\th_i = \frac{\ph_i}{\lambda}-\frac{q}{p-q} =\th\nth_i$ for all $i>n$.
\end{proof}

In order to determine the value of $n$ for which the MLE is $\th\nth$ (still assuming sorted $\ph$), we first prove some properties of $\th\nth$ for arbitrary $n$ in Lemma~\ref{SM:lemma:props-nth}.

\begin{lemma}\label{SM:lemma:props-nth}
    Fix some $\ph\in\Delta$ satisfying $\ph_1\leq\cdots\leq\ph_\K$, and some $n\in\set{0..\K-1}$.
    Let $g(n):= (1-nq)\ph_{n+1} - q (\sum_{i>n} \ph_i)$.
    The following assertions hold:
    \begin{enumerate}
    \setlength{\topsep}{0pt}
    \setlength{\itemsep}{0pt}
    \setlength{\parskip}{0pt}
    \setlength{\itemindent}{-0.5em} 
    \setlength{\labelsep}{0.2em} 
        \item[(a)] $\th\nth \in \Delta$ if and only if $g(n)\geq 0$.
        \item[(b)] For $n>0$, if $\th\nthprev \in \Delta$, then $\th\nth \in \Delta$.
        \item[(c)] For $n>0$, if $\th\nthprev \in \Delta$, then $\lambda\nthprev \geq \lambda\nth$.
        \item[(d)] For $n>0$, if $\ph_{n-1}=\ph_n$ then $g(n)=g(n-1)$.
        \item[(e)] For $n>0$, if $g(n-1)=0$ then $\theta\nthprev=\theta\nth$.
    \end{enumerate}
\end{lemma}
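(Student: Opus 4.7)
The plan is to establish two technical identities up front and then let them discharge all five claims with minimal extra work. The first identity is a telescoping formula for $g$: rewriting $g(n-1) = (1-(n-1)q)\ph_n - q\sum_{i\geq n}\ph_i$ and splitting off the term $i=n$ from the sum gives $g(n-1) = (1-nq)\ph_n - q\sum_{i>n}\ph_i$, so
\[
    g(n) - g(n-1) = (1-nq)(\ph_{n+1} - \ph_n).
\]
Since $nq \leq (\K-1)q = 1-p < 1$, the coefficient $1-nq$ is strictly positive, and since $\ph$ is sorted the gap $\ph_{n+1} - \ph_n$ is nonnegative, so $g$ is nondecreasing. The second tool is simply the explicit definition of $\lambda\nth$, kept on hand to clear denominators when converting statements about $\th\nth$ into statements about $g$ or about $\ph$.

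For (a), note that $\sum_i \th\nth_i = 1$ holds by the very definition of $\lambda\nth$, so membership in $\Delta$ reduces to nonnegativity of the entries. The zero coordinates are trivial, and among the formula entries $\ph_i/\lambda\nth - q/(p-q)$ the smallest sits at $i=n+1$ because $\ph$ is sorted; substituting $\lambda\nth$ and clearing positive denominators turns $\th\nth_{n+1} \geq 0$ into precisely $g(n) \geq 0$. Part (b) then follows at once from the telescoping identity: if $g(n-1) \geq 0$, then $g(n) = g(n-1) + (1-nq)(\ph_{n+1}-\ph_n) \geq 0$ as well, and (a) converts this back to $\th\nth \in \Delta$. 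For (c), cross-multiplying $\lambda\nthprev \geq \lambda\nth$ by the positive denominators $1-(n-1)q$ and $1-nq$ and cancelling common terms reduces the inequality to $(1-nq)\ph_n \geq q \sum_{i>n}\ph_i$, which is precisely $g(n-1) \geq 0$ and holds by (a) applied to $n-1$.

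Part (d) is an immediate corollary of the telescoping identity, since equal consecutive $\ph$-values collapse the gap factor to zero. For (e), assume $g(n-1)=0$, which rewrites as $(1-nq)\ph_n = q\sum_{i>n}\ph_i$. Plugging this into $\lambda\nth$ gives $\lambda\nth = (p-q)\ph_n/q$; performing the same substitution inside $\lambda\nthprev$ and simplifying shows it collapses to the same value. Once $\lambda\nth = \lambda\nthprev$, the entries of $\th\nth$ and $\th\nthprev$ for $i>n$ coincide by the common formula, and at $i=n$ the formula yields $\th\nthprev_n = \ph_n/\lambda\nthprev - q/(p-q) = 0 = \th\nth_n$. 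The main obstacle is nothing conceptually deep but rather keeping the algebra organized: each step from $n-1$ to $n$ shifts the summation range and the normalizing factor $1-nq$ simultaneously, and the telescoping rewrite of $g(n-1)$ together with the closed form of $\lambda\nth$ are precisely the bookkeeping devices that make these shifts cancel cleanly across (a)--(e).
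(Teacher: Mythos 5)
Your proof is correct and follows essentially the same route as the paper's: the telescoping identity $g(n)-g(n-1)=(1-nq)(\ph_{n+1}-\ph_n)$ is exactly the paper's computation for parts (b) and (d), the reduction of (a) to the single entry $i=n+1$ and of (c) to $g(n-1)\geq 0$ match the paper's sign argument, and your explicit collapse $\lambda\nthprev=\lambda\nth=(p-q)\ph_n/q$ in (e) is just a more worked-out version of the paper's reuse of $\delta=0$. The only caveat---which you share with the paper's own proof---is that the telescoping identity yields $g(n)=g(n-1)$ under the hypothesis $\ph_{n+1}=\ph_n$, whereas part (d) as stated assumes $\ph_{n-1}=\ph_n$; this is evidently an index typo in the lemma statement rather than a flaw in your argument.
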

\begin{proof}
(a)  Let $\th=\th\nth$ and $\lambda=\lambda\nth$.
We have $\th\in\Delta$ iff $\frac{\ph_i}{\lambda} \geq \frac{q}{p-q}$ for all $i>n$.
This holds for all $i>n$ iff it holds for $i=n+1$ since $\ph_{n+1}\leq\ph_{n+2}\leq\cdots\leq\ph_\K$.
So, \[
\begin{aligned}
    \th\in\Delta &\iff \frac{\ph_{n+1}}{\lambda} \geq \frac{q}{p-q}\\
    &\iff \ph_{n+1} (1-nq) \geq q \Sigma_{i>n} \ph_i\\
    &\iff g(n) \geq 0.
\end{aligned}
\]
(b) Since $\ph_{n+1}\geq\ph_n$,
\[
\begin{aligned}
    g(n)-g(n-1) &= (1-nq)\ph_{n+1} - (1-(n-1)q)\ph_n\\
    &\phantom{=} - q(\Sigma_{i>n} \ph_i - \Sigma_{i>n-1} \ph_i)\\
    &= (1-nq) (\ph_{n+1} - \ph_n) \geq 0\\
\end{aligned}
\]
(c) We have $g(n-1)\geq 0$ from $\th\nthprev\in\Delta$, and
\[
\begin{aligned}
    &\sign(\lambda\nthprev - \lambda\nth)\\
    &= \sign(\frac{\Sigma_{i>n-1} \ph_i}{1-nq+q} - \frac{\Sigma_{i>n} \ph_i}{1-nq})\\
    &= \sign(\underbrace{(1-nq)(\Sigma_{i>n-1} \ph_i) - (1-nq+q)(\Sigma_{i>n} \ph_i)}_\delta),
\end{aligned}
\]
and
\[
\begin{aligned}
    \delta &= (1-nq) \ph_n - q\sum_{i>n} \ph_i\\
    &= (1-(n-1)q) \ph_n - q\sum_{i>n-1} \ph_i = g(n-1)\geq 0,
\end{aligned}
\]
so $\lambda\nthprev \geq \lambda\nth$.
\\(d) If $\ph_{n+1} = \ph_n$, from (b) one obtains $g(n)-g(n-1) = (1-nq) (\ph_{n+1} - \ph_n) = 0$.
\\(e) If $g(n-1) = 0$ then $\delta=0$ and thus $\lambda\nthprev=\lambda\nth$ which means that $\th\nthprev=\th\nth$.

\end{proof}

We now characterize the value of $n$ using the aforementioned properties.

\begin{lemma}\label{SM:lemma:n-smallest}
    ($n$ is as small as possible)
    For $n>0$, if $\th\nthprev\in\Delta$ then $\th\nth\in\Delta$ and $D_{KL}(\ph, \M(\th\nthprev)) \leq D_{KL}(\ph, \M(\th\nth))$.
\end{lemma}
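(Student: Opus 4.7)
The plan is to handle the two conclusions separately. The membership claim $\th\nth\in\Delta$ is immediate from part (b) of Lemma~\ref{SM:lemma:props-nth}, so all the work goes into the divergence inequality.

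For the inequality, I would expand
\[
D_{KL}(\ph,\M(\th\nth)) - D_{KL}(\ph,\M(\th\nthprev)) = \sum_{i=1}^\K \ph_i \log \frac{\M(\th\nthprev)_i}{\M(\th\nth)_i},
\]
and split the sum by the regions $i\leq n-1$, $i=n$, and $i>n$. From Definition~\ref{SM:def:nth}, on the first region both $\M(\th\nth)_i$ and $\M(\th\nthprev)_i$ equal $q$, so these terms vanish. Writing $M_{n-1} := \sum_{j\geq n}\ph_j$ and $M_n := \sum_{j>n}\ph_j$, the contribution at $i=n$ is $\ph_n \log \frac{(1-(n-1)q)\ph_n}{q\, M_{n-1}}$, and on the last region the $\ph_i$ factors cancel inside the logarithm, so that chunk collapses to a single term $M_n \log \frac{(1-(n-1)q)\, M_n}{(1-nq)\, M_{n-1}}$.

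The key rewriting is to set $u := \ph_n$, $v := M_n$, $a := q$, $b := 1-nq$, which yields the identities $a+b = 1-(n-1)q$ and $u+v = M_{n-1}$. The surviving two-term expression then becomes
\[
u \log \frac{(a+b)u}{a(u+v)} + v \log \frac{(a+b)v}{b(u+v)} = (u+v)\, D_{KL}(P, Q),
\]
where $P$ and $Q$ are the two-point distributions $(u,v)/(u+v)$ and $(a,b)/(a+b)$ on $\set{1,2}$. Gibbs' inequality then gives non-negativity, which is exactly the desired bound.

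The hard part will be spotting this Bernoulli-KL reformulation; once it is in hand, everything else is routine algebra. On the side I would verify that all denominators are strictly positive: $M_n>0$ because $\ph_\K>0$ and $n<\K$, and $1-nq \geq p>0$ because $n\leq \K-1$ and $q = (1-p)/(\K-1)$.
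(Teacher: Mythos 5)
Your proof is correct, and it takes a genuinely different (and more robust) route than the paper's. The paper starts from the same decomposition $\delta=\sum_i\delta_i$ with $\delta_i=\ph_i\log\bigl(\M(\th\nthprev)_i/\M(\th\nth)_i\bigr)$, but then argues \emph{term by term} that every $\delta_i\geq 0$: zero for $i<n$, nonnegative at $i=n$ because $\M(\th\nthprev)_n\geq q=\M(\th\nth)_n$, and (it claims) nonnegative for $i>n$ because $\lambda\nthprev\geq\lambda\nth$. That last step is the weak point: for $i>n$ the ratio is $\lambda\nth/\lambda\nthprev\leq 1$, so those summands are actually \emph{nonpositive} (e.g.\ $\K=2$, $p=3/4$, $\ph=(0.3,0.7)$, $n=1$ gives $\delta_2=0.7\log(0.7/0.75)<0$), and a pointwise argument cannot close the gap. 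Your aggregation of the $i=n$ term with the collapsed $i>n$ block,
\[
u\log\frac{(a+b)\,u}{a\,(u+v)}+v\log\frac{(a+b)\,v}{b\,(u+v)}\;=\;(u+v)\,D_{KL}(P,Q)\;\geq\;0,
\]
with $P$ and $Q$ the normalized pairs as you define them, is exactly what is needed: the positive contribution at $i=n$ dominates the negative tail, and Gibbs' inequality certifies this in one stroke. What each approach buys: the paper's pointwise bound would be shorter if it worked, but only your grouped Bernoulli--KL identity actually establishes the inequality; as a bonus, your argument does not even use the hypothesis $\th\nthprev\in\Delta$ for the divergence inequality (only for $\th\nth\in\Delta$ via part (b)), whereas the paper invokes it at $i=n$. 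Your side checks on positivity of $M_n$, $M_{n-1}$ and $1-nq$ are right; the only degenerate case left implicit (also ignored by the paper) is $q=0$, where the $i=n$ term is $+\infty$ unless $\ph_n=0$ and the inequality holds trivially.
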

\begin{proof}
    Let $\delta_i:=\ph_i \log\frac{\M(\th\nthprev)_i}{\M(\th\nth)_i}$.
    Notice that $\delta_i\geq 0$ for all $i>n$, because
    \begin{itemize}
        \item For $i < n$, $\M(\th\nthprev)_i =q = \M(\th\nth)_i$, so $\delta_i=0$.
        \item For $i=n$, since $\th\nthprev\in\Delta$, when $\th\nthprev_i\geq 0$, hence $\M(\th\nthprev)_i \geq q = \M(\th\nth)_i$.
        \item For $i>n$, $\delta_i = \ph_i \log \frac{\lambda\nthprev \ph_i}{\lambda\nth \ph_i}$, in which case $\delta_i\geq 0$ because $\lambda\nthprev\geq\lambda\nth$ (Lemma~\ref{SM:lemma:props-nth}).
    \end{itemize}
    Therefore, for $\delta:=D_{KL}(\ph, \M(\th\nth)) - D_{KL}(\ph, \M(\th\nthprev))$, we have
    $\delta = \sum_i \delta_i \geq 0$.
\end{proof}

Finally, everything is put together in Theorem~\ref{SM:thm:new-main}.

\begin{theorem}\label{SM:thm:new-main}
    (Wrap up)
    Fix any given distribution $\ph'\in\Delta$ and consider the following procedure.
    \begin{enumerate}
    \setlength{\itemsep}{0pt}
    \setlength{\parskip}{0pt}
    \setlength{\itemindent}{-0.5em} 
    \setlength{\labelsep}{0.2em} 
    \item[1.] Let $\sigma$ be an arg-sort permutation for $\ph'$, so that $\ph'_{\sigma(1)}\leq\cdots\leq\ph'_{\sigma(\K)}$.
    \item[2.] Let $\ph\in\Delta$ be given by $\ph_i:=f_\sigma(\ph)_i=\ph'_{\sigma(i)}$.
    \item[3.] Let $n$ be the smallest integer such that $g(n)\geq 0$, where $g(n):= (1-nq)\ph_{n+1} - q (\sum_{i>n} \ph_i)$.
    \item[4.] Let $\th\nth$ be the vector defined in Definition~\ref{SM:def:nth} for $\ph$ and $n$.
    \item[5.] Let $\th'\in\Delta$ be given by $\th'_{\sigma(i)} = \th\nth_i$.
    \end{enumerate}
    Then $\th'$ is the unique MLE for $\ph'$.
    Alternatively, for pedagogic reasons, $\th\nth$ can also be presented as \[
    \th\nth_i := \begin{cases}
        0 &\text{ if } \ph_i\leq\tau := \ph_n
        \\ \frac{\ph_i}{\lambda} - \frac{q}{p-q}  &\text{ otherwise}.
    \end{cases}
    \]
\end{theorem}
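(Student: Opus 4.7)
The plan is to assemble the preceding lemmas into a tight deductive chain that first reduces to the sorted case, then identifies the unique $n$ that the MLE must correspond to. First, by Theorem~\ref{SM:thm:permutations}, a vector $\th'$ is an MLE for $\ph'$ if and only if $f_{\sigma}(\th') = \th$ is an MLE for the sorted vector $\ph = f_\sigma(\ph')$. So the entire argument reduces to the case $\ph_1 \leq \cdots \leq \ph_\K$, and step 5 of the procedure simply undoes the sorting permutation at the end.

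In the sorted case, Theorem~\ref{SM:thm:DKL} lets us work with $D_{KL}(\ph,\M(\th))$ instead of the likelihood. Lemma~\ref{SM:lemma:optimal-is-nth} then tells us that any MLE must be of the form $\th^{[m]}$ for some $m \in \set{0,\ldots,\K-1}$. So the task is reduced to selecting the correct index $m$. By Lemma~\ref{SM:lemma:props-nth}(a), $\th^{[m]}$ is an element of $\Delta$ (and hence a legitimate candidate) exactly when $g(m) \geq 0$. Furthermore, Lemma~\ref{SM:lemma:props-nth}(b) says the set of valid indices is upward-closed, so it has the form $\set{n, n+1,\ldots,\K-1}$ for some smallest valid $n$, which matches step 3 of the procedure.

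Next I would invoke Lemma~\ref{SM:lemma:n-smallest} repeatedly along the chain $n, n+1, \ldots, m$ to conclude that $D_{KL}(\ph,\M(\th^{[n]})) \leq D_{KL}(\ph,\M(\th^{[m]}))$ for every valid $m \geq n$. Combined with the fact that MLEs must take the form $\th^{[m]}$ for valid $m$, this shows $\th^{[n]}$ attains the minimum KL divergence, hence is an MLE. Translating back through $\sigma$ yields that $\th'$ defined in step 5 is an MLE for $\ph'$.

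For uniqueness, suppose $\th^{[m]}$ with $m > n$ is also an MLE. Then equality must hold throughout the chain from Lemma~\ref{SM:lemma:n-smallest}, in particular $D_{KL}(\ph,\M(\th^{[n]})) = D_{KL}(\ph,\M(\th^{[n+1]}))$. Tracking the sign of each term $\delta_i$ in the proof of Lemma~\ref{SM:lemma:n-smallest} forces $\lambda^{[n]} = \lambda^{[n+1]}$, which via the computation in part (c) of Lemma~\ref{SM:lemma:props-nth} means $g(n) = 0$, and then part (e) gives $\th^{[n]} = \th^{[n+1]}$. Iterating this up to $m$ yields $\th^{[n]} = \th^{[m]}$. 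I expect the main obstacle to be exactly this uniqueness argument, since extracting the equality condition from Lemma~\ref{SM:lemma:n-smallest} requires carefully checking the per-coordinate contributions and reconciling them with parts (c)--(e) of Lemma~\ref{SM:lemma:props-nth}; the rest of the theorem is a clean assembly of already-proven facts.
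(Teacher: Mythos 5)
Your proof follows essentially the same route as the paper's: reduce to the sorted case via Theorem~\ref{SM:thm:permutations}, restrict the candidates to the vectors $\th\nth$ via Lemma~\ref{SM:lemma:optimal-is-nth}, identify the feasible indices as a non-empty suffix via Lemma~\ref{SM:lemma:props-nth}(a)--(b), and select the smallest feasible $n$ by telescoping Lemma~\ref{SM:lemma:n-smallest}. The one place you go beyond the paper is the explicit uniqueness argument (equality along the chain forces $\lambda\nth=\lambda\nthnext$, hence $g(n)=0$, hence $\th\nth=\th\nthnext$ by part~(e)), which the paper merely asserts; your instinct that this is the delicate step is right, because the per-coordinate quantities $\delta_i$ in the printed proof of Lemma~\ref{SM:lemma:n-smallest} are not in fact all nonnegative (the terms with $i>n$ have the opposite sign), so the equality case must be extracted from the aggregate sum (or from strict convexity) rather than term by term.
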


\begin{proof}
    By Theorem~\ref{SM:thm:permutations}, the search for an MLE $\theta'$ for $\ph'$ can be reduced to the search for an MLE $\th$ for $\ph$, via $\th_i=\th_{\sigma(i)}$.
    Since $\ph$ is sorted, by Lemma~\ref{SM:lemma:optimal-is-nth}, there exists some $n\in\set{0..\K}$ such that $\th = \th\nth$.
    By Lemma~\ref{SM:lemma:n-smallest}, the set $\set{n:\th\nth\in\Delta}$ is a non-empty suffix of $\set{0..\K-1}$, and the optimal $n$ is the smallest in this set.
    Regarding the alternative presentation, Lemmas~\ref{SM:lemma:n-smallest} and~\ref{SM:lemma:props-nth} (Part (d)) imply that the optimal $n$ satisfies $n=0$ or $\ph_{n-1}\ne\ph_n$.
    This means that $i\leq n$ iff $\ph_i\leq \tau := \ph_n$, so the two definitions are equivalent.
\end{proof}

\section{Theoretical use of the formula}\label{SM:app:collinearity}

\begin{theorem}\label{SM:thm:collinearity}
    If the smallest entry in $\phi$ takes value $v<q$, and the second smallest is at least $(\K q - v) / (\K - 1)$, then $\hat\th^\INVP$, $\hat\th^\ourMLE$ and $\hat\th^\INVN$ are collinear, with $\hat\th^\ourMLE$ in the middle.
\end{theorem}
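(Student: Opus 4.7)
The plan is to identify each of the three estimators explicitly under the stated hypotheses and then exhibit $\hat\theta^{\ourMLE}$ as a convex combination of $\hat\theta^{\INVN}$ and $\hat\theta^{\INVP}$. Let $i^*$ be the index at which $\phi$ attains its smallest value $v$. The hypothesis $v < q$ together with $p + (K-1)q = 1$ gives $(Kq - v)/(K-1) > v$, so the second smallest is strictly larger than $v$ and $i^*$ is uniquely determined. Rewriting the lower bound on the other coordinates as $\phi_j - q \geq (q-v)/(K-1)$ for every $j \neq i^*$ will be convenient in what follows.

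First I would show that all three estimators vanish at coordinate $i^*$. For $\hat\theta^{\INVN}$ this is immediate, since $\hat\theta^{\INV}_{i^*} = (v-q)/(p-q) < 0$ is the only negative coordinate. For $\hat\theta^{\ourMLE}$ I would apply Theorem~\ref{thm:main-text}: compute $g(0) = v - q < 0$ and then check $g(1) \geq 0$; a short manipulation using $1 - Kq = p - q$ reduces the hypothesis to $g(1) \geq (q-v)(p-q)/(K-1) > 0$, so $n = 1$ and the smallest coordinate is zeroed. For $\hat\theta^{\INVP}$ I would use the standard characterization of Euclidean projection onto the simplex, $\hat\theta^{\INVP}_i = \max(\hat\theta^{\INV}_i - \mu, 0)$ for a suitable $\mu$, and verify that $\mu = (q-v)/((K-1)(p-q))$ preserves unit sum when exactly coordinate $i^*$ is clipped. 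The hypothesis on the second-smallest entry is precisely $\min_{j \neq i^*}(\hat\theta^{\INV}_j - \mu) \geq 0$, ruling out any further clipping.

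At this point the three estimators on the remaining $K-1$ coordinates take the forms
\[
\hat\theta^{\INVN}_j = \frac{\phi_j - q}{p - v}, \quad \hat\theta^{\INVP}_j = \frac{\phi_j - q}{p - q} - \frac{q - v}{(K-1)(p-q)}, \quad \hat\theta^{\ourMLE}_j = \frac{\phi_j(1-q)}{(p-q)(1-v)} - \frac{q}{p-q},
\]
and the claim reduces to producing $\alpha \in (0,1)$ with $\hat\theta^{\ourMLE} = \alpha\,\hat\theta^{\INVN} + (1-\alpha)\,\hat\theta^{\INVP}$. Matching the $\phi_j$-coefficients and the constant terms independently pins down the unique candidate $\alpha = (p - v)/(1 - v)$, which lies strictly in $(0,1)$ since $v < p < 1$. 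Both identities collapse to the single relation $(1 - p) = (K - 1)q$ once one substitutes $1 - \alpha = (1 - p)/(1 - v)$.

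The only conceptually subtle step is the closed-form derivation of $\hat\theta^{\INVP}$: it requires applying the simplex-projection characterization to the specific input $\hat\theta^{\INV}$ and using the second-smallest hypothesis at precisely the point where one must rule out additional clipping. Every remaining step is substitution and elementary algebra, and collinearity with $\hat\theta^{\ourMLE}$ in the middle follows from $\alpha \in (0,1)$.
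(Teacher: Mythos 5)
Your proposal is correct and follows essentially the same route as the paper's proof: both derive the same closed forms for the three estimators on the non-zeroed coordinates and then exhibit a fixed affine relation among them; your convex combination $\hat\th^\ourMLE = \alpha\,\hat\th^\INVN + (1-\alpha)\,\hat\th^\INVP$ with $\alpha = \frac{p-v}{1-v}$ is exactly the reparametrization $\alpha = 1/t$ of the paper's identity $\hat\th^\INVN - \hat\th^\INVP = t\,(\hat\th^\ourMLE - \hat\th^\INVP)$ with $t = \frac{1-v}{p-v} > 1$. If anything, you are more explicit than the paper in justifying the closed forms (checking $g(0)<0$ and $g(1)\ge 0$ for \ourMLE, and using the second-smallest hypothesis to rule out extra clipping in the simplex projection), steps the paper leaves implicit.
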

\begin{proof}
Let $n$ be the coordinate of the smallest component $\ph_n<q$.
We have $\hat\th^\INVP_n = \hat\th^\INVN_n = \hat\th^\ourMLE_n = 0$, and the remaining components are given by the following formulas.
\[
\begin{aligned}
\hat\th^\INVP_i &= \ph_i \frac{1}{p-q} - \frac{(\K q - \ph_n)}{(\K - 1) \pq} \\
\hat\th^\INVN_n &= \ph_i \frac{1}{p-\ph_n} - \frac{q}{p-\ph_n} \\
\hat\th^\ourMLE_n &= \ph_i \frac{1-q}{\pq (1-\ph_n)} - \frac{q}{\pq}
\end{aligned}
\]

Letting $t:=\frac{1-\ph_n}{p-\ph_n}$, it can be shown that $\hat\th^\INVN-\hat\th^\INVP = t (\hat\th^\ourMLE-\hat\th^\INVP)$.
Therefore, the three estimators are collinear.
Furthermore, since $1-\ph_n > p-\ph_n$, then $t>1$, which implies that $\hat\th^\ourMLE$ is between $\hat\th^\INVN$ and $\hat\th^\INVP$.
\end{proof}

\begin{lemma}\label{SM:lemma:continuity}
    The four estimators (\INV, \INVP, \INVN and \ourMLE) are continuous, i.e. the function $\phi \mapsto \hat\theta$ is continuous where $\hat\theta$ is the result under each respective estimator. 
\end{lemma}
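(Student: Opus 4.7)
The plan is to handle the four estimators in turn, with the three baseline cases being essentially immediate and the bulk of the effort concentrated on \ourMLE. For \INV, the map $\ph \mapsto (\ph - q\,\mathbf{1})/\pq$ is affine and hence continuous. For \INVP, the Euclidean projection onto the closed convex simplex $\Delta$ is standardly $1$-Lipschitz, so composing it with the continuous \INV yields a continuous estimator. For \INVN, I would write $\hat\th^\INVN(\ph) = m(\ph)/D(\ph)$ where $m_i(\ph) := \max(0,\hat\th^\INV_i(\ph))$ and $D(\ph) := \sum_i m_i(\ph)$; since $\max(0,x)\geq x$ pointwise and $\sum_i \hat\th^\INV_i(\ph) = 1$, one has $D(\ph) \geq 1$, bounding the denominator away from zero and making the quotient continuous.

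The heart of the argument is \ourMLE, because the definition of this estimator involves an argsort and a threshold index $n$ that can change as $\ph$ varies. The strategy is to realize $\hat\th^\ourMLE$ as a well-defined, piecewise-continuous function whose pieces agree on their overlaps. For each permutation $\sigma$ of $\{1,\dots,\K\}$ and each $n\in\{0,\dots,\K{-}1\}$, let $C_{\sigma,n}\subseteq\Delta$ denote the closed region consisting of those $\ph$ for which $\sigma$ sorts $\ph$ non-decreasingly and $n$ is the smallest index with $g(n)\geq 0$ (with $g$ computed on the sorted vector, as in Theorem~\ref{thm:new-main}). Inside each $C_{\sigma,n}$, the formula for $\th\nth$ in Definition~\ref{SM:def:nth} is a rational expression in $\ph$ with strictly positive denominator $1-nq$, and so \ourMLE is continuous cell-by-cell.

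What remains is to verify continuity across the boundaries between cells, and this is precisely where parts (d) and (e) of Lemma~\ref{lemma:props-nth} are designed to help. Boundaries come in two flavors. First, within a fixed ordering $\sigma$, the threshold $n$ can jump from $n{-}1$ to $n$, which happens exactly on the locus $g(n{-}1)=0$; by part (e) of Lemma~\ref{lemma:props-nth} one has $\th\nthprev=\th\nth$ on this locus, so the formula continuously matches. Second, when two adjacent sorted entries of $\ph$ coincide, two permutations $\sigma,\sigma'$ differing by the corresponding transposition are both valid argsorts; by part (d) of Lemma~\ref{lemma:props-nth} the function $g$ is unchanged across this boundary, so the selected $n$ is the same in both orderings, and the explicit formula assigns equal components at the two swapped positions (since the $\ph$-values there agree). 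Hence the two cell-definitions coincide on the overlap, and the piecewise construction glues to a function that is continuous on all of $\Delta$.

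The main obstacle I anticipate is the second kind of boundary: one has to verify carefully that when multiple argsort permutations coexist, the transposition acts trivially on the whole output vector rather than merely preserving its $\ell^2$-norm or its nonzero pattern. This is the place where permutation equivariance (as used in the proof of Theorem~\ref{thm:new-main} via Theorem~\ref{SM:thm:permutations}) must be combined with Lemma~\ref{lemma:props-nth}(d) to ensure both that the selected $n$ is unambiguous and that the two formulas produce the same values at the swapped indices.
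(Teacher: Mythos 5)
Your proposal is correct and follows essentially the same route as the paper's proof: affinity for \INV, continuity of projection for \INVP, a positive denominator for \INVN, and for \ourMLE a piecewise decomposition by the threshold index $n$ glued along the loci $g(n-1)=0$ via Lemma~\ref{lemma:props-nth}(e). Your treatment of the second kind of boundary --- where ties in $\ph$ make the argsort permutation ambiguous, handled via permutation equivariance and Lemma~\ref{lemma:props-nth}(d) --- is a point the paper's proof passes over by simply assuming $\ph$ sorted, so your version is slightly more complete on that front.
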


\begin{proof}
    
    The estimator $\INV$ is a linear transformation of the distribution $\phi$, 
    and is therefore continuous. By definition, $\INVP$ can be written as 
    $\INVP(\phi) = f(\INV(\phi))$, where $f(v)$ is the projection of $v\in \mathbb{R}^K$
    onto the set of distributions over $\{1,2, \dots, K\}$. Note that $f$ is 
    a continuous mapping since that set is closed and convex. 
    Therefore $\INVP$ is continuous by the continuity of both $\INV$ and $f$. 

    $\INVN$ can be written as $\INVN(\phi) = f(\INV(\phi))$ for the
    postprocessing function $f:\mathbb{R}^K \to \mathbb{R}^K$ (that transforms a vector 
    into a valid distribution) via
    $f(v)_i = \frac{\max(0, v_i)}{\sum_{i=1}^K \max(0,v_i)}$.
    Note that the numerator and the denominator are continuous functions of the vector $v$ and the dominator is always positive, so $f$ is continuous. 
    Hence, by the continuity of $f$ and \INV, \INVN is continuous.
    
    Lastly we want to prove that $h(\phi) :=  \hat\th^\ourMLE$ is continuous. Using Theorem~\ref{SM:thm:permutations} we can suppose that the entries of $\phi$ are increasing. If we recall the notation of $g$ in Lemma~\ref{SM:lemma:props-nth} and partition the domain of $\ph$ as $\Delta = \cup_{n=0}^{\K-1} A_n$ where $A_n = \{\phi : g(n)\geq 0 \ \land g(n-1)<0\}$, then $h$ can be expressed as a piecewise continuous function taking value $\th\nth$ in each set $A_n$.
    For $n$ fixed, $h$ is continuous in $A_n$, therefore, the continuity of the MLE needs only to be tested at the borderline points $\phi$ where $g(n)=0$.
    It was shown in Lemma~\ref{SM:lemma:props-nth}~(e) that ${\th\nth}=\th\nthnext$, so $h$ is continuous in $\Delta$.
\end{proof}

\def\pto{\stackrel{\text{p}}{\to}}
\begin{lemma}\label{SM:lemma:continuity-implies-consistency}
    If $\hat\th:=h(\ph)$ is an estimator for $\th\in\Delta$ and $h$ is continuous and satisfies $h(\ph^*)=\th$, where $\ph^*_i:=\pq \th_i + q$, then $\hat\th$ is consistent.
\end{lemma}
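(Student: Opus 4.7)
The plan is a two-step argument that combines the weak law of large numbers with the continuous mapping theorem.

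First, I would note that the observed outputs $y_1,\dots,y_N$ are i.i.d.\ draws from the compound distribution $\mathcal{M}(\theta)$, whose probability mass on category $i$ is exactly $\phi^*_i = q + (p-q)\theta_i$. The empirical histogram $\phi$ is therefore the sample mean of the i.i.d.\ one-hot indicator vectors $\mathbf{1}_{\{y_u = i\}}$. By the weak law of large numbers applied coordinatewise, $\phi_i \pto \phi^*_i$ for each $i \in \{1,\dots,\K\}$ as $N\to\infty$, and hence $\phi \pto \phi^*$ in $\mathbb{R}^\K$ (any norm, since we are in finite dimension).

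Second, I would invoke the continuous mapping theorem: since $h$ is continuous at $\phi^*$ (by hypothesis of global continuity from Lemma~\ref{SM:lemma:continuity}) and $\phi \pto \phi^*$, it follows that $h(\phi) \pto h(\phi^*)$. But by hypothesis $h(\phi^*) = \theta$, so $\hat\theta = h(\phi) \pto \theta$, which is precisely the definition of a consistent estimator for $\theta$.

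The argument is essentially mechanical once the two ingredients are in place; there is no real obstacle, and in particular nothing specific to RR is used beyond the fact that $\phi^* = \mathcal{M}(\theta)$ is the true mean of the empirical histogram. The only point worth emphasizing is that the hypothesis $h(\phi^*) = \theta$ is exactly the Fisher-consistency condition at the population level, so the lemma is really a clean ``continuity plus Fisher-consistency implies consistency'' statement specialized to the finite-alphabet LDP setting. This lemma can then be applied to each of the four estimators by checking $h(\phi^*)=\theta$ separately (trivial for \INV, \INVN, \INVP since $\phi^*$ already yields nonnegative $\INV(\phi^*)=\theta$, and straightforward for \ourMLE since the threshold $\tau^*$ selects no zero coordinates when $\phi_i^* \geq q$ for all $i$).
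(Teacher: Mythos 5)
Your proposal is correct and follows essentially the same route as the paper: weak law of large numbers for the multinomial empirical histogram gives $\ph \pto \ph^*$, and the continuous mapping theorem transfers this through $h$ to yield $h(\ph)\pto h(\ph^*)=\th$. The only (minor) addition in the paper's version is a final remark that, since $\ph$ ranges over the bounded set $\Delta$, convergence in probability to the constant $\th$ upgrades to $L^2$ convergence, matching the MSE-based notion of consistency used in the main text.
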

\begin{proof}
    Suppose $h:\Delta\to\Delta$ is continuous and fix $\th\in\Delta$.
    For every $n$, let $X^{(n)}$ be a random multiset of $n$ values sampled from $\set{1..\K}$ with distribution $\th$, let $Y^{(n)}$ be the multiset of randomized values after applying RR to each value in $X^{(n)}$, and let $\ph_n\in\Delta$ be the histogram of $Y^{(n)}$.
    This forms a sequence of random variables $\ph^{(1)}, \ph^{(2)}, ...\in\Delta$.

    $Y^{(n)}$ follows a categorical distribution governed by $\Pr(Y^{(n)}=y\,|\th) = p\th_y + q (1-\th_y) = \ph^*_y$, which is independent of $n$.
    Therefore $\ph^{(n)}$ follows a a Multinomial distribution, so the law of large numbers for Multinomial distributions guarantees that $\ph^{(n)} \pto \ph^*$ (componentwise convergence in probability).

    Using the Continuous Mapping Theorem, 
    since $h$ is continuous, then $h(\ph^{(n)}) \pto h(\ph^*) = \theta$.
    In addition, $\norm{\ph^{(n)}}$ is bounded because $\ph^{(n)}\in\Delta$, so convergence in probability or in distribution to a constant is equivalent to convergence in L2.
    Therefore, $\norm{h(\ph) - \theta} \pto 0$.
\end{proof}

\begin{theorem}\label{SM:thm:consistency}
    The four estimators (\INV, \INVP, \INVN and \ourMLE) are consistent.
    Moreover, for \INV and \INVP, the rate of L2-convergence is $O(\K/N)$.
\end{theorem}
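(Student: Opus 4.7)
The plan is to establish consistency for all four estimators via a single unified argument based on the two lemmas just proven, and then to bound the MSE directly for \INV and \INVP. First I would invoke Lemma~\ref{SM:lemma:continuity-implies-consistency}: it is enough to exhibit, for each estimator $h\in\set{\INV,\INVN,\INVP,\ourMLE}$, the fixed-point identity $h(\ph^*)=\th$, where $\ph^*_i:=q+\pq\,\th_i$ is the expected histogram. Continuity of the four estimators has already been proven in Lemma~\ref{SM:lemma:continuity}, so consistency follows automatically once this fixed-point property is checked.

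Checking $h(\ph^*)=\th$ is immediate for three of the four cases: $\INV(\ph^*)_i=(\ph^*_i-q)/\pq=\th_i$, and since $\th\in\Delta$ the vector $\INV(\ph^*)$ has no negative entries and already lies in the simplex, so the post-processing defining \INVN and \INVP acts as the identity on it. For \ourMLE I would avoid running Algorithm~\ref{alg:proposed} on $\ph^*$ directly, which would require tracing through the ties at the value $q$ whenever $\th$ has zero entries, and instead observe that $\DKL(\ph^*,\M(\th))=\DKL(\ph^*,\ph^*)=0$. Hence $\th$ is a global minimizer of $\DKL(\ph^*,\M(\cdot))$ over $\Delta$, so $\th$ is an MLE of $\ph^*$, and the uniqueness proven in Theorem~\ref{SM:thm:new-main} forces $\ourMLE(\ph^*)=\th$.

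For the $O(\K/N)$ rate of \INV I would use that \INV is unbiased, so its MSE equals the sum of componentwise variances. Summing \eqref{eq:var-inv} over $i$ and using $\sum_i \th_i = 1$ produces one term proportional to $\K/N$ and one term proportional to $1/N$ (both for fixed $p,q$), giving $\MSE_\th(\hat\th^\INV)=O(\K/N)$. For \INVP I would invoke the standard fact that the Euclidean projection $\Pi_\Delta$ onto the closed convex set $\Delta$ is $1$-Lipschitz, together with the observation that $\th$ is a fixed point of $\Pi_\Delta$ because $\th\in\Delta$. This yields the deterministic inequality $\norm{\hat\th^\INVP-\th}=\norm{\Pi_\Delta(\hat\th^\INV)-\Pi_\Delta(\th)}\leq\norm{\hat\th^\INV-\th}$, whence $\MSE_\th(\hat\th^\INVP)\leq\MSE_\th(\hat\th^\INV)=O(\K/N)$ after squaring and taking expectations.

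The main obstacle in this plan is the \ourMLE step: a direct algorithmic verification becomes delicate precisely when $\th$ has zero entries, since $\ph^*$ then has ties at the smallest value $q$ and the ``smallest $n$ with $g(n)\geq 0$'' rule has to be traced through by hand. The KL-divergence route sidesteps this entirely by appealing to the already-established uniqueness of the MLE, which is why I would prefer it over the combinatorial approach.
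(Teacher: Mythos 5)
Your proposal is correct and follows the same overall architecture as the paper's proof: consistency via Lemmas~\ref{SM:lemma:continuity} and~\ref{SM:lemma:continuity-implies-consistency}, the $O(\K/N)$ rate for \INV by summing the variances in \eqref{eq:var-inv} under unbiasedness, and the rate for \INVP by comparison with \INV. Two local differences are worth noting. First, the paper dismisses the consistency claim as an ``immediate consequence'' of the two lemmas and never checks the hypothesis $h(\ph^*)=\th$; you verify it explicitly, and your treatment of the \ourMLE{} case --- observing that $\DKL(\ph^*,\M(\th))=0$ forces $\th$ to be a global minimizer, hence by Theorem~\ref{SM:thm:new-main} the unique MLE of $\ph^*$ --- is exactly the right way to avoid tracing Algorithm~\ref{alg:proposed} through the ties at $q$, and genuinely fills a gap the paper leaves implicit. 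Second, for \INVP{} the paper argues via the optimality of the projection ($\norm{\hat z-\hat\th}\leq\norm{\hat z-\th}$ since $\th\in\Delta$) plus the triangle inequality, obtaining $\norm{\hat\th-\th}\leq 2\norm{\hat z-\th}$ and hence a factor of $4$ on the MSE; you instead invoke the $1$-Lipschitz (non-expansiveness) property of Euclidean projection onto a closed convex set together with $\Pi_\Delta(\th)=\th$, which gives the sharper pointwise bound $\norm{\hat\th^\INVP-\th}\leq\norm{\hat\th^\INV-\th}$ and so $\MSE_\th(\hat\th^\INVP)\leq\MSE_\th(\hat\th^\INV)$. Both routes yield $O(\K/N)$; yours is slightly stronger and equally standard.
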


\begin{proof}
The consistency of the four estimators is an immediate consequence of Lemmas~\ref{SM:lemma:continuity} and~\ref{SM:lemma:continuity-implies-consistency}.

For $\hat \th = \hat\th^\INV$, from Equation~\eqref{eq:var-inv} and its unbiasedness, we have $\MSE_\th(\hat \th) = \sum_{i=1}^\K E((\hat \th_i-\th_i)^2) = \sum_{i=1}^\K \Var(\hat \th_i)$, which yields $\MSE_\th(\hat \th) = \frac{\K q(1{-}q) + \pq(1{-}2q{-}\pq)}{N\pq}$.

For $\hat \th=\hat\th^\INVP$, let $\hat z=\hat\th^\INV$.
By definition, we have $\norm{\hat z - \hat \th} \leq \norm{\hat z - \th}$.
So $\norm{\hat \th - \th}\leq \norm{\hat \th - \hat z} + \norm{\hat z - \th} \leq 2 \norm{\hat z - \th} \in O(\K/N)$.
\end{proof}

\section{Complete Results} \label{SM:app:results}

\paragraph{Evaluation.} 
This section reports the full set of experimental results across all configurations of domain size $\K \in \{50, 100, 1000, 5000, 10000\}$, privacy budgets $\epsilon \in \{1, 2, 4, 10\}$, and sample sizes $N \in \{10^2, \dots, 10^6\}$.
We evaluate three estimators (our \ourMLE{}, \INVP{}, and \INVN{}) under three Zipf concentration levels: $s = 0.01$ (near-uniform), $s = 1.3$ (moderate skew), and $s = 2.5$ (high concentration).

Figures~\ref{SM:fig:zipf_mse_s_0.01} to~\ref{SM:fig:zipf_mse_s_2.5} show Mean Squared Error (MSE) results, and Figures~\ref{SM:fig:zipf_nll_s_0.01} to~\ref{SM:fig:zipf_nll_s_2.5} present corresponding negative log-likelihood values.
Each figure is structured as a grid of subplots: rows vary $K$, columns vary $\epsilon$, and the leftmost column in each row displays the true underlying histogram.
This layout provides a comprehensive view of how estimator performance evolves with domain size, privacy level, and distributional skew.

\paragraph{Discussion.}
These results confirm and extend the observations made in the main paper. 
The performance of \INVP{} and \INVN{} varies significantly depending on the privacy level, sample size, and data skewness. 
In some settings, \INVP{} consistently outperforms the others (especially under high concentration), while in others, \INVN{} dominates (e.g., under uniform distributions). 
In yet other configurations, the relative ordering between these baselines changes with $N$ or $\epsilon$. 
Throughout all conditions, \ourMLE{} remains reliably close to the best-performing estimator, regardless of metric, domain size, or distribution. 
This stable and predictable behavior reinforces our conclusion that \ourMLE{} is a robust and principled default, especially in practical scenarios where the true data distribution is unknown and the privacy regime can vary.

\begin{figure*}[!htb]
    \centering
    \includegraphics[width=1\linewidth]{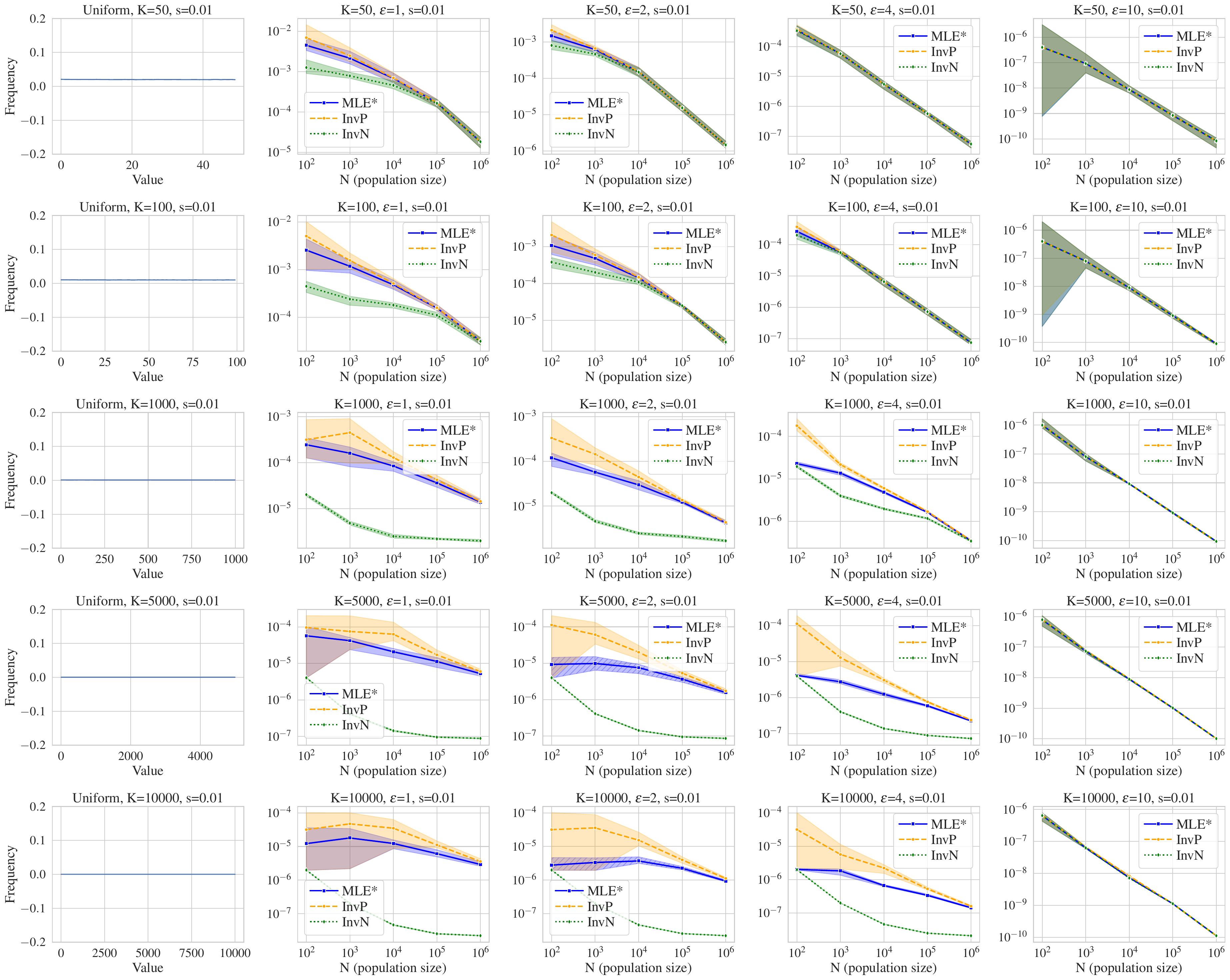}
    \caption{MSE results for near-uniform distribution ($s = 0.01$). 
    Rows vary domain size $K$, and columns vary privacy level $\epsilon$. 
    The far-left subplot of each row shows the true histogram.}
    \label{SM:fig:zipf_mse_s_0.01}
\end{figure*}

\begin{figure*}[!htb]
    \centering
    \includegraphics[width=1\linewidth]{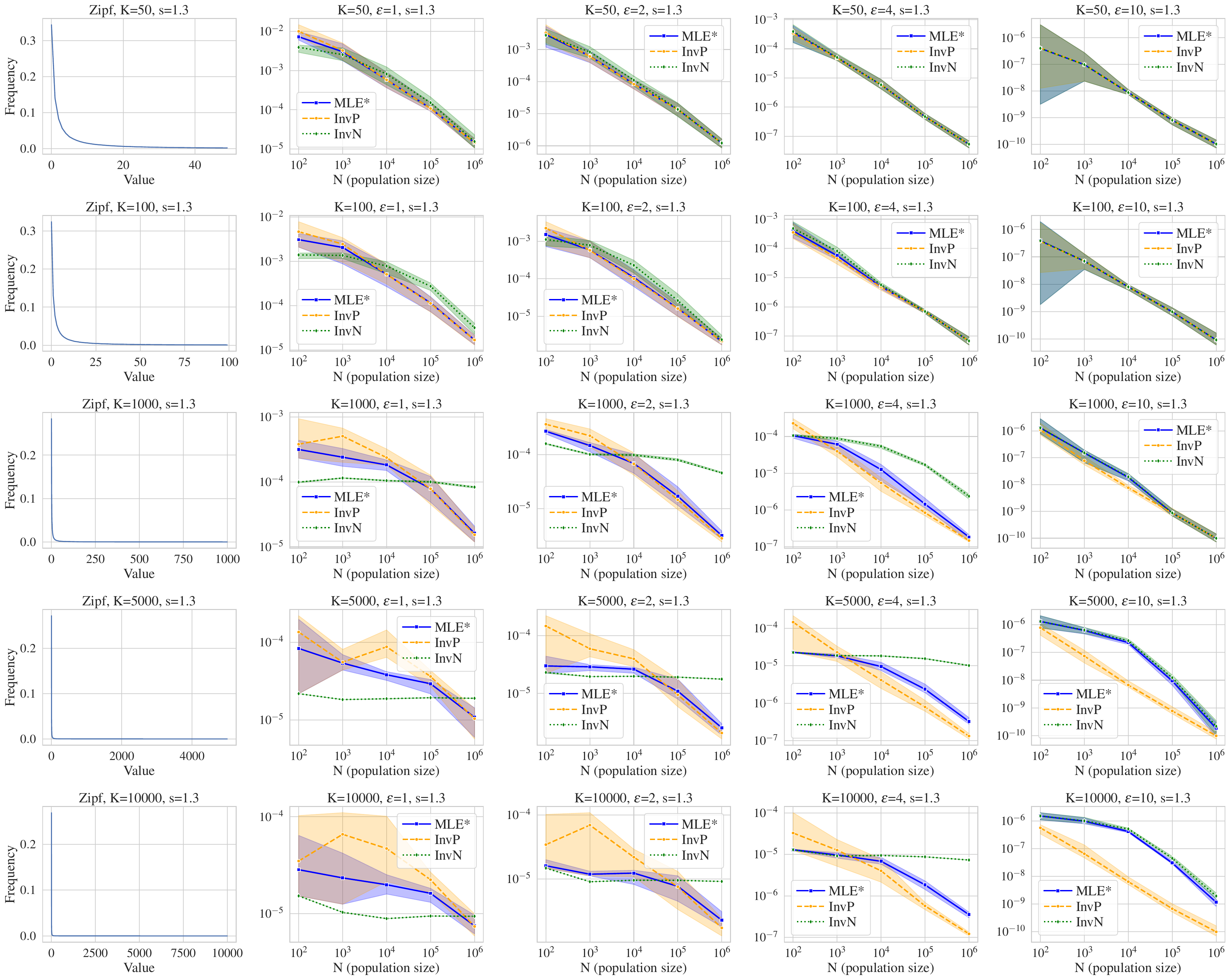}
    \caption{MSE results for moderately skewed distribution ($s = 1.3$). 
    Rows vary domain size $K$, and columns vary privacy level $\epsilon$. 
    The far-left subplot of each row shows the true histogram.}
    \label{SM:fig:zipf_mse_s_1.3}
\end{figure*}

\begin{figure*}[!htb]
    \centering
    \includegraphics[width=1\linewidth]{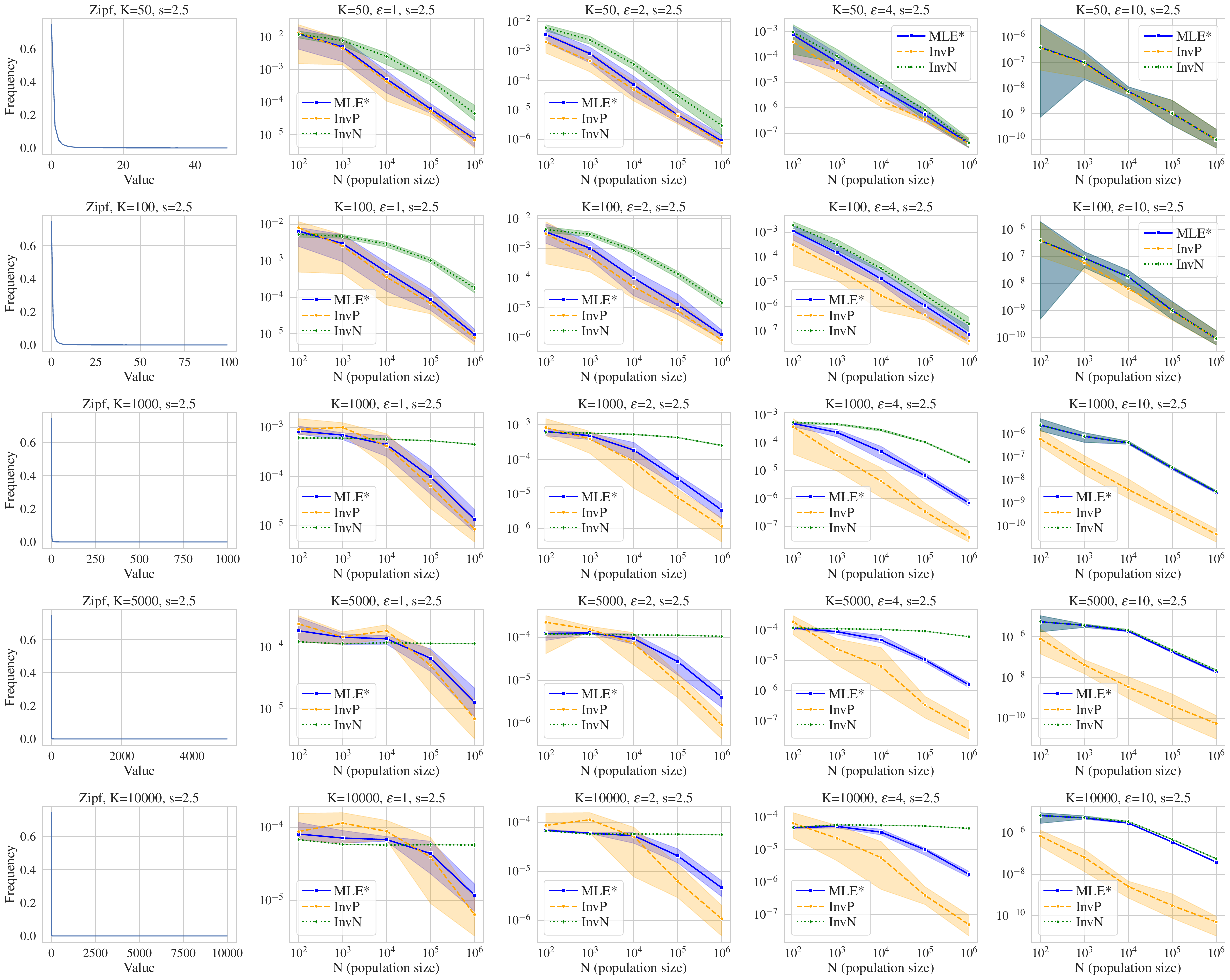}
    \caption{MSE results for highly concentrated distribution ($s = 2.5$). 
    Rows vary domain size $K$, and columns vary privacy level $\epsilon$. 
    The far-left subplot of each row shows the true histogram.}
    \label{SM:fig:zipf_mse_s_2.5}
\end{figure*}

\begin{figure*}[!htb]
    \centering
    \includegraphics[width=1\linewidth]{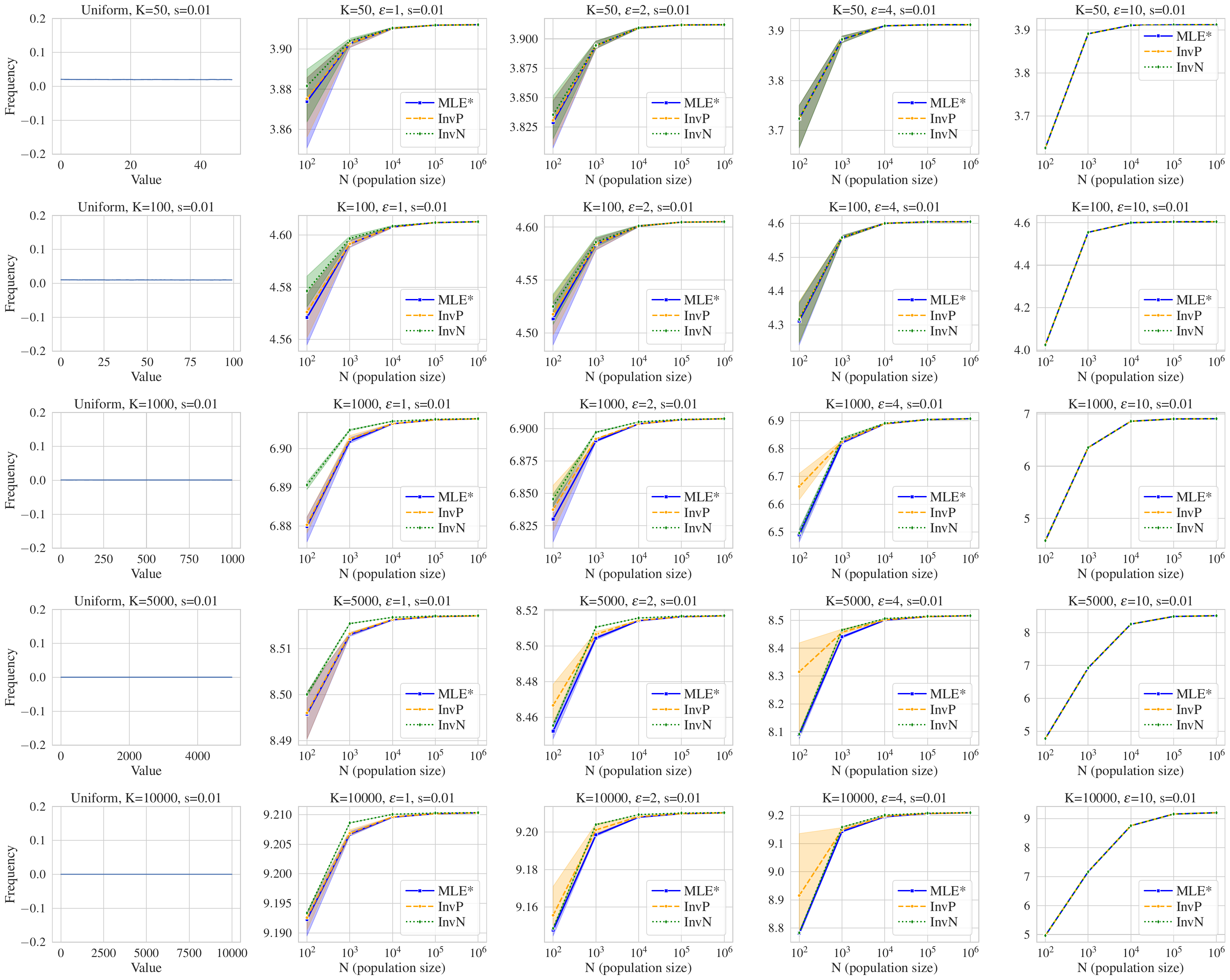}
    \caption{Negative log-likelihood results for near-uniform distribution ($s = 0.01$). 
    Rows vary domain size $K$, and columns vary privacy level $\epsilon$. 
    The far-left subplot of each row shows the true histogram.}
    \label{SM:fig:zipf_nll_s_0.01}
\end{figure*}

\begin{figure*}[!htb]
    \centering
    \includegraphics[width=1\linewidth]{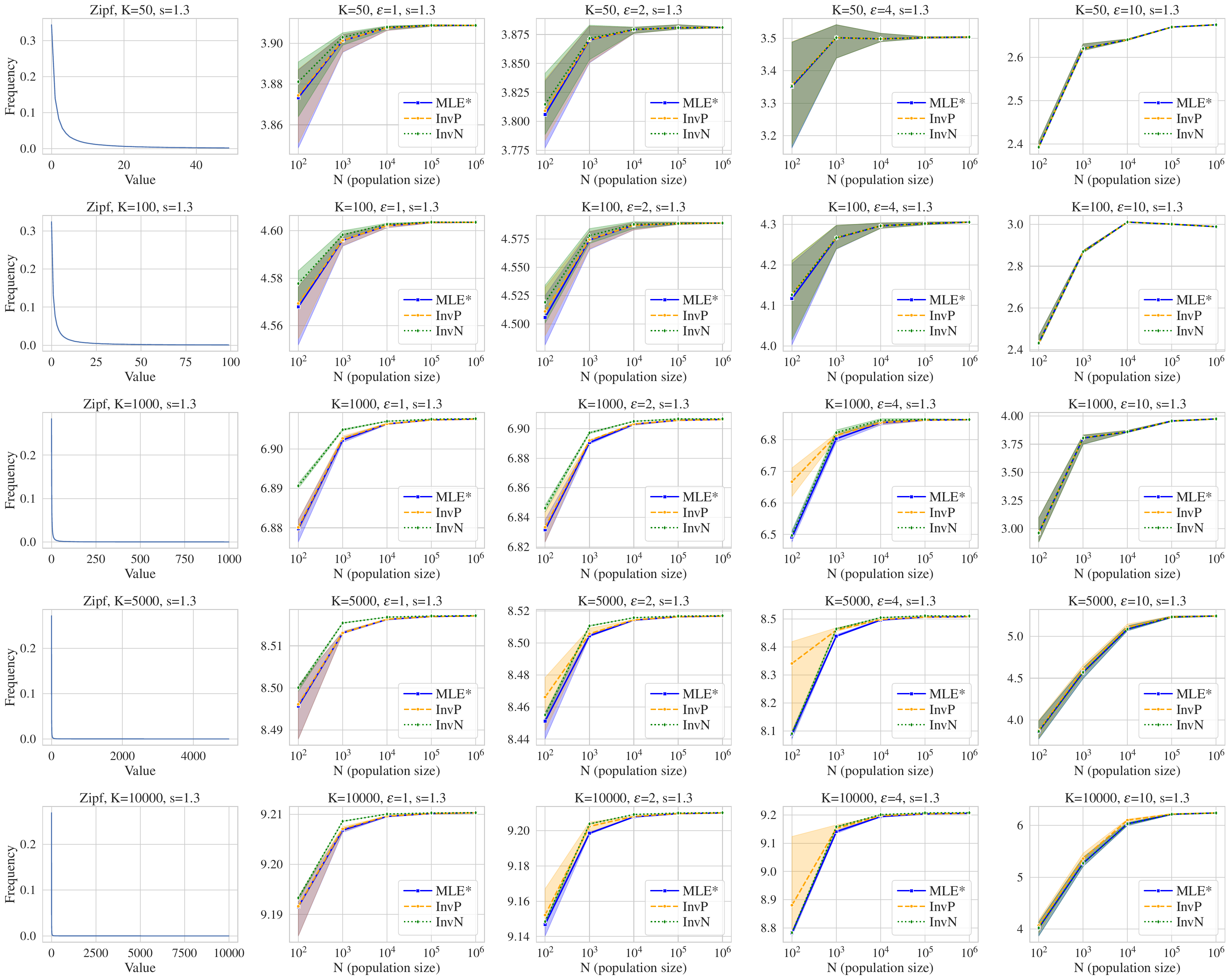}
    \caption{Negative log-likelihood results for moderately skewed distribution ($s = 1.3$). 
    Rows vary domain size $K$, and columns vary privacy level $\epsilon$. 
    The far-left subplot of each row shows the true histogram.}
    \label{SM:fig:zipf_nll_s_1.3}
\end{figure*}

\begin{figure*}[!htb]
    \centering
    \includegraphics[width=1\linewidth]{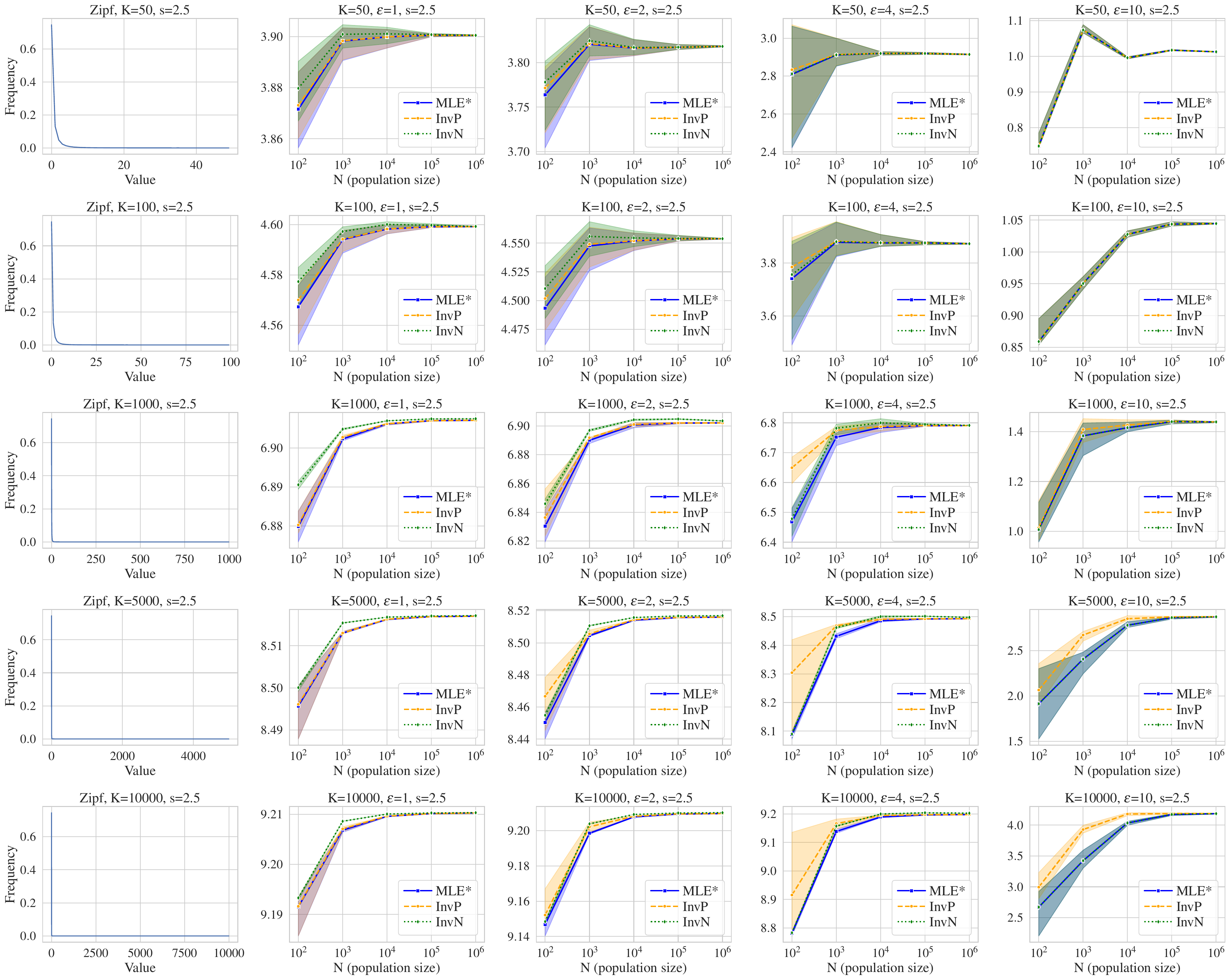}
    \caption{Negative log-likelihood results for highly concentrated distribution ($s = 2.5$). 
    Rows vary domain size $K$, and columns vary privacy level $\epsilon$. 
    The far-left subplot of each row shows the true histogram.}
    \label{SM:fig:zipf_nll_s_2.5}
\end{figure*}


\end{document}